\newtheorem{theorem}{Theorem}[section]
\newtheorem{lemma}[theorem]{Lemma}
\newcommand{\order}[1]{\Theta(#1)}
\newcommand{\pol}[2]{
\ifthenelse{\equal{#1}{}}{N^{#2}}{#1 \cdot N^{#2}}
}
\def\tsc#1{\csdef{#1}{\textsc{\lowercase{#1}}\xspace}}
\begin{document}
\doublespacing

\let\WriteBookmarks\relax
\def\floatpagepagefraction{1}
\def\textpagefraction{.001}
\shorttitle{Speedup and efficiency of computational parallelization}
\shortauthors{Schryen}

%title
%\setlength{\parindent}{0em} 
%\textbf{\begin{center} \Large %\textbf{Theoretical speedup %limits of computational parallelization: A unifying framework of parallel computing in homogeneous computing  environments oder: An asymptotic perspective ...} \end{center}}\ \\

%\title [mode = title]{Theoretical speedup and efficiency limits of computational parallelization: A unifying approach and asymptotic analysis of parallel computing in homogeneous computing environments}

\title [mode = title]{Speedup and efficiency of computational parallelization: A unifying approach and asymptotic analysis}

%author
\author[1]{Guido Schryen}[]
\cormark[1]
\ead{guido.schryen@uni-paderborn.de}

\address[1]{Department of Management Information Systems, Paderborn University, Warburger Strasse 100, Paderborn 33098, Germany}

\cortext[cor1]{Corresponding author}

\begin{abstract}
%The abstract should state briefly the purpose of the research, the principal results and major conclusions
In high performance computing environments, we observe an ongoing increase in the available number of cores. For example, the current TOP500 list reveals that nine clusters have more than 1 million cores. This development calls for re-emphasizing performance (scalability) analysis and speedup laws as suggested in the literature (e.g., Amdahl’s law  and Gustafson’s law), with a focus on asymptotic performance. Understanding speedup and efficiency issues of algorithmic parallelism is useful for several purposes, including the optimization of system operations, temporal predictions on the execution of a program, the analysis of asymptotic properties, and the determination of speedup bounds. However, the literature is fragmented and shows a large diversity and heterogeneity of speedup models and laws. These phenomena make it challenging to obtain an overview of the models and their relationships, to identify the determinants of performance in a given algorithmic and computational context, and, finally, to determine the applicability of performance models and laws to a particular parallel computing setting.
In this work, I provide a generic speedup (and thus also efficiency) model for homogeneous computing environments. My approach generalizes many prominent models suggested in the literature and allows showing that they can be considered special cases of a unifying approach. The genericity of the unifying speedup model is achieved through parameterization. Considering combinations of parameter ranges, I identify six different asymptotic speedup cases and eight different asymptotic efficiency cases. Jointly applying these speedup and efficiency cases, I derive eleven scalability cases, from which I build a scalability typology. Researchers can draw upon my suggested typology to classify their speedup model and to determine the asymptotic behavior when the number of parallel processing units increases. \color{red} Also, the description of two computational experiments demonstrates the practical application of the model and the typology. \color{black} In addition, my results may be used and extended in future research to address various extensions of my setting.
 
\end{abstract}

%\begin{highlights}
%\item (add highlights)
%\end{highlights}

\begin{keywords}
    Performance analysis \sep Speedup \sep Efficiency \sep Scalability \sep Asymptotic analysis
\end{keywords}

\maketitle
\doublespacing

%\tableofcontents

% storyline

\section{Introduction}
\label{sec:introduction}

 Parallel computing has become increasingly important for solving hard computational problems in a variety of scientific disciplines and industrial fields. The large diversity and deployment of parallel computing across disciplines, including  artificial intelligence, arts and humanities, computer science, digital agriculture, earth and environmental sciences, economics, engineering, health sciences, mathematics, and natural sciences, is mirrored in usage statistics published by supercomputer clusters (e.g., \citep{NCSA21,JSC22}). This ongoing progress in computational sciences through parallelization has been fostered through the end of exponential growth in single processor core performance \citep{fuller2011future} and the availability of high performance computing (HPC) infrastructures, tools, libraries, and services as commodity goods offered by computing centers of universities, public cloud providers, and open
source communities.
\color{red}
The development of parallel computing has been accompanied by the study of its performance. Generally speaking, performance in parallel computing refers to the behavior of a parallel computing system in processing specified tasks with respect to the amount of resources, such as parallel computing units, that are used or available. It encompasses a variety of metrics and concepts, including speedup, efficiency, load balancing, and communication overhead, among others. Performance has also been studied as the amount of parallel computing resources grows to infinity (asymptotic performance), resulting in a variety of speedup laws. For an introduction to performance in parallel computing, see, for example, \citep[ch. 5]{Grama2003}. 
\color{black}

Beyond these developments, the number of cores available as parallel processing units has increased substantially over the past years. While the statistics of the TOP500 list (as of June 2022) shows values of 35,339.2 (10th percentile), 67,328 (median), and 225,465.6 (90th percentile), the corresponding values of the lists as of June 2017 and June 2012 amount to (16,545.6; 36,000; 119,808) and (6,776; 13,104; 37,036.8), respectively \citep{TOP500}. In addition, in contrast to the lists of 2012 and 2017, which both include only one site with more than 1 million cores, the current list shows that nine clusters have more than 1 million cores. This enormous growth in the number of cores which are available for parallel processing calls for re-emphasizing asymptotic performance analysis (e.g., \citep{Che2014,al2020amdahl}) and speedup laws as suggested in the literature (e.g., Amdahl's law \citep{amdahl1967validity} and Gustafson's law \citep{gustafson1988reevaluating}).

%storyline: 
%- relevance of parallel computing (ggf. nicht oder nur wenig umfangreich begründen in einem dedizierten Outlet): limit of serial computing, TOP500 (see my St. Gallen talk)

%- attention has been brought to speedup of parallel programs most prominently through Amdahls law and gustafsosn llaw, two milestones in the speedup literature, with several extensions and other laws now available in the literature. 

In general, studying performance of algorithmic parallelism is useful for several purposes; these include optimizing system operations via design-time and run-time management (e.g., \citep{hill2008amdahl,zidenberg2012multiamdahl,xia2017voltage,zhuravlev2012survey,chen2018scheduling}), making temporal predictions on the execution of a program (e.g., \citep{rai2010performance,ababei2018survey}), and analyzing asymptotic speedup and efficiency properties as well as determining speedup and efficiency bounds (e.g., \citep{sun2010reevaluating,Che2014}). In this article, I focus on the two latter purposes, which have been addressed only rarely in the literature.   

Analyzing performance of parallel algorithms is challenging as it needs to account for diversity in several regards. For example, existing speedup models and laws make different assumptions with respect to the homogeneity/heterogeneity of parallel processing units, variations of workloads, and methodological characteristics and application fields of algorithms (e.g., optimization, data analytics, simulation). This heterogeneity has resulted in a landscape of many speedup models and laws, which, in turn, makes it difficult to obtain an overview of the models and their relationships, to identify the determinants of performance in a given algorithmic and computational context, and, finally, to determine the applicability of performance models and laws to a particular parallel computing setting.

My focus lies on the development of a generic and unifying speedup and efficiency model for homogeneous parallel computing environments. I consider a range of determinants of speedup covered in the literature and prove that existing speedup laws can be derived from special cases of my model. My model depends neither on specific system architectures, such as symmetric multiprocessing (SMP) systems or graphics processing units (GPU), nor on software properties, such as critical regions; I rather perform a theoretical analysis \color{red} although I also conduct computational experiments to demonstrate the application of the model. \color{black} I further focus on the analysis of asymptotic properties of the suggested model to study speedup and efficiency limits and bounds in the light of a computing future with an increasing number of parallel processing units.

My results contribute to research on the performance (in terms of scalability) of computational parallelization in homogeneous computing environments in several regards: (1) I suggest a generic speedup and efficiency model which accounts for a variety of conditions under which parallelization occurs so that it is broadly applicable. This wide scope allows conducting performance analysis in many of those cases which are not covered by existing models and laws with restrictive assumptions. (2) I generalize the fragmented landscape of speedup and efficiency models and results, and I provide a unifying speedup and efficiency model which allows overcoming the perspective of conflicting speedup models by showing that these models can be interpreted as special cases of a more universal model. (3) From my asymptotic analysis, I derive a typology of scalability (speedup and efficiency), which researchers may use to classify their speedup model and/or to determine the asymptotic behavior of their particular application. I also provide a theoretical basis for explaining sublinear, linear and superlinear speedup and efficiency and for deriving speedup and efficiency bounds in the presence of an enormous growth of the number of available parallel processing units. \color{red}  (4) I demonstrate the practical application of the speedup and efficiency model and the typology with computational experiments on matrix multiplication and lower-upper matrix decomposition. \color{black} To sum up, I consolidate prior research on performance in homogeneous parallel computing environments and I provide a theoretical understanding of quantitative effects of various determinants of asymptotic performance in parallel computing.

The remainder of the article is structured as follows: In Section 2, I provide a brief overview of the foundations of speedup and efficiency analysis in parallel computing. I proceed in Section 3 with the suggestion of a generic speedup and efficiency model. In Section 4, I perform a mathematical analysis of my model in order to determine theoretical speedup and efficiency limits. \color{red} I describe the computational experiments in Section 5. In Section 6, I discuss the application of the proposed model and scalability typology, and I consider parallelization overhead. Finally, I provide conclusions of my research in Section 7. \color{black}
\section{Foundations of speedup and efficiency analysis}
\label{sec:foundationsSpeedupEfficiencyAnalysis}

The main purpose of parallel computation is to take advantage of increased processing power to solve problems faster or to achieve better solutions. The former goal is referred to as \textit{scalability}, and scalability measures fall into two main groups: \textit{speedup} and \textit{efficiency}. Speedup $S(N)$ is defined as the ratio of sequential computation time $T(1)$ to parallel computation time $T(N)$ needed to process a task with given workload when the parallel algorithm is executed on $N$ parallel processing units (PUs) (e.g., cores in a multicore processor architecture); i.e.,
\begin{equation}
S(N):=\frac{T(1)}{T(N)},\; T: \mathbb{N} \rightarrow R^{> 0}\label{eq:definitionSpeedupBasic}
\end{equation}The sequential computation time $T(1)$ can be measured differently, leading to different interpretations of speedup \citep{barr1993reporting}: When $T(1)$ refers to the fastest serial time achieved on any serial computer, speedup is denoted as \textit{absolute}. Alternatively, it may also refer to the time required to solve a problem with the parallel program on one of the parallel PUs. This type of speedup is qualified as \textit{relative}. In this work, I focus on relative speedup. 

As speedup relates the time required to process a given workload on a single PU to the time required to process the same workload on $N$ PUs, you need to determine this workload. It is usually divided into two sub-workloads, the sequential workload and the parallelizable workload. While the former is inherently sequential and necessarily needs to be executed on a single PU, the latter can be executed in parallel on several PUs. Independent of the number of available parallel PUs $N$, the time required to solve a task is the sum of the time to handle the sequential workload and the time to handle the parallelizable workload  of the given task. When only a single PU is available, the time for the sequential workload $s$ and for the parallelizable workload $p$ are usually normalized by setting $s+p=1$; i.e., $s$ and $p$ represent the sequential and the parallelizable fractions of the overall execution time.

For some applications, it is useful to consider a fixed workload (e.g., when solving an instance of an optimization problem), which is independent of the number of parallel PUs ($N$) available, and then to analyze how computation of the fixed workload on a single PU can be speeded up by using multiple PUs. Speedup models of this type are referred to as \textit{fixed-size models}, such as Amdahl's law \citep{amdahl1967validity}. For other applications (e.g., when analyzing data), is more appropriate to use the availability of $N$  PUs to solve tasks with workloads which increase depending on $N$. Then, scalability analysis deals with investigating how computation of the variable workload on one PU can be speeded up by using multiple PUs.  Speedup models of that type are referred to as \textit{scaled-size models}, such as Gustafson's law \citep{gustafson1988reevaluating}.  

With varying number of PUs $N$, both the sequential and parallelizable workload may be considered scalable. It is common in the literature to introduce two workload scaling functions $f(\cdot), g(\cdot)$ with $f,g:\mathbb{N} \rightarrow R^{> 0}$ for the sequential and parallelizable workload, respectively; i.e.; the (normalized) time to process the sequential and the parallelizable workload on a single PU are $s\cdot f(N)$ and $p\cdot g(N)$, respectively. Thus, the (normalized) time to process the overall workload on a single PU amounts to $s \cdot f(N) + p \cdot g(N)$. Usually, it is assumed that $f(1)=g(1)=1$ so that $T(1)=s+p=1$ holds; however, my workload scaling functions do not require to meet this assumption.\footnote{The option to have values $f(1)\neq 1$ and/or $g(1)\neq 1$ allows scaling both fractions $s$ and $p$, which may be useful when an overall workload to be executed on a machine $A$ (with $p+s=1$) is now executed on a different machine $B$ on which the times to execute the serial and the parallelizable workload are scaled at either the same or different rates.} An example of using a scaling function for the sequential workload can be found in the scaled speedup model suggested by \citet[p. 31f\/f]{schmidt2017parallel}. While scaling functions for sequential workloads can be found only rarely, scaling functions for parallelizable workloads are much more common; see, for example, the  scale-sized speedup model of Gustafson  \citep{gustafson1988reevaluating}, the memory-bound speedup model of Sun and Ni \citep{sun1990another,sun1993scalable,sun2010reevaluating}, the generalized scaled speedup model of \citet{juurlink2012amdahl} and the scaled speedup model of \citet[p. 31f\/f]{schmidt2017parallel}. A discussion of the relationship between problem size and the number of PUs $N$ can be found in \citep[p. 32f]{trobec2018introduction}.
%memory-bound speedup model of Sun and Ni \citep{sun1990another,sun1993scalable,sun2010reevaluating} (Sun-Ni model \citep{sun1990another,sun1993scalable}: Assume that each computing node is a processor-memory pair. Increasing the number of processors, then, will increase the memory capacity as well \citep[p. 184]{sun2010reevaluating}

%$T: \mathbb{N} \rightarrow R^{\geq 0}$ is the time required to execute a given program/solve a given task by use of $N \in \mathbb{N}$ (parallel) processing units (PUs); $T(1)$ sequential time

While the time required to process the sequential workload is independent of the number of PUs $N$, the time required to process the parallelizable workload depends on $N$ as this workload can be processed in parallel. Usually, the parallelizable workload is considered to be equally distributed on $N$ PUs, resulting in the (normalized) time $\frac{p\cdot g(N)}{N}$ to handle the parallelizable workload. However, there are tasks possible when the time required to handle the parallelizable workload is affected due to its actual parallel execution; for example, when a mathematical optimization problem, such as a mixed-inter linear program (MILP), is solved with a parallelized branch-and-bound algorithm, then good bounds may be found early so that the branch-and-bound tree does not grow as large as with the sequential execution of the branch-and-bound algorithm. This effect may result in a denominator function which is not identical to $N$ and allows explaining superlinear speedup as it has been observed in the literature (e.g., \citep{Rauchecker_Schryen_2019,dabah2022efficient,Gonggiatgul2023}). I account for this effect with a scaling function $h(\cdot)$, with $h :\mathbb{N} \rightarrow R^{> 0}$.   

Finally, processing one single large task on several parallel PUs involves some sort of overhead, which is often rooted in initialization, communication, and synchronization efforts \citep{yavits2014effect,huang2013extending,flatt1989performance}. I account for the additional time required for these efforts with an overhead function $z(\cdot)$,  $z :\mathbb{N} \rightarrow R^{> 0}$.  

The abovementioned workloads and temporal effects are visualized in Figure \ref{fig:speedupVisualization}. The resulting general speedup equation is the given by 
\begin{equation}
\label{eq:generalSpeedupEquation}
S(N)= \frac{T(1)}{T(N)}=\frac{s\cdot f(1)+p\cdot g(1)}{s\cdot f(N)+\frac{p\cdot g(N)}{h(N)}+z(N)}\; , N\in \mathbb{N}
\end{equation}  

%\begin{equation}
%T(1)=s\cdot f(1)+p\cdot g(1)
%\end{equation}  
%\begin{equation}
%T(N)=s\cdot f(N)+\frac{p\cdot g(N)}{h(N)}+z(N)
%\end{equation}  

\begin{figure}
\fbox{\includegraphics[width=\linewidth]{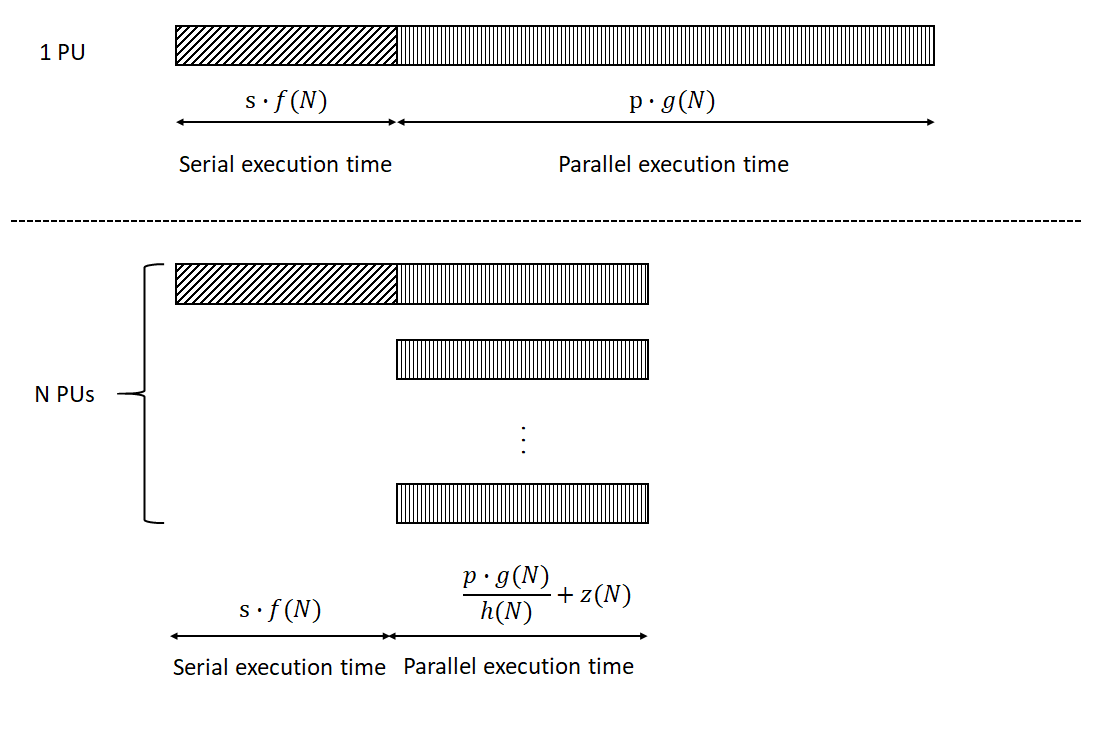}}
\caption{Workloads and temporal effects of parallelization}
\label{fig:speedupVisualization}
\end{figure}

Note that the speedup equation given in (\ref{eq:generalSpeedupEquation}) is a generalization of several well known speedup models, including those used in Amdahl's law \citep{amdahl1967validity} (set $f(N)=g(N)=1, h(N)=N, z(N)=0$), Gustafson's law \citep{gustafson1988reevaluating} (set $f(N)=1, g(N)=h(N)=N, z(N)=0$), and the generalized scaled speedup model \citep{juurlink2012amdahl} (set $f(N)=1, g(N)=N^{0.5}, h(N)=N, z(N)=0$). 

Based upon speedup $S(N)$, efficiency $E(N)$ relates speedup to the number of parallel PUs used to achieve this speedup, and it is defined by
\begin{equation}
\label{eq:generalEfficiencyEquation}
E(N)= \frac{S(N)}{N}=\frac{T(1)}{N \cdot T(N)}=\frac{s\cdot f(1)+p\cdot g(1)}{N \cdot (s\cdot f(N)+\frac{p\cdot g(N)}{h(N)}+z(N))}\; , N\in \mathbb{N}
\end{equation}

\section{A generic speedup and efficiency model}
\label{sec:genricSpeedupEfficiencyModel}
Based upon the general speedup equation (\ref{eq:generalSpeedupEquation}), I derive a generic speedup and efficiency model, which I use in the remainder of this article to analyze its asymptotic behavior. The generic speedup model uses power functions for $f(\cdot),g(\cdot),h(\cdot)$ and ignores any overhead induced through parallelization. The use of power functions is widely adopted in the literature, included in many prominent speedup models \citep{amdahl1967validity,gustafson1988reevaluating,schmidt2017parallel,sun1993scalable,juurlink2012amdahl} and is based on the assumption that many algorithms have a polynomial complexity in terms of computation and memory requirement \citep[p. 184]{sun2010reevaluating}. As I focus on the analysis of the asymptotic behavior, I always take the highest degree term. The motivation for neglecting any parallelization overhead (i.e., $z(N)=0\; \forall N \in \mathbb{N}$), as it is done in many, if not most speedup and efficiency models in the literature, is manifold: First, the overhead is often unknown. Second, omitting an overhead term simplifies computations and provides a basis for developing laws which include an overhead function $z(N) \neq 0$. Third, speedup and efficiency values determined without considering overhead represent upper bounds for practically achievable speedup and efficiency values when overhead occurs.

I use the following power functions:
\begin{equation}
\label{def:powerFunctions}
f(N):=c_f N^{\alpha_f}\qquad g(N):=c_g N^{\alpha_g}\qquad h(N):=c_h N^{\alpha_h},\qquad c_f,c_g,c_h > 0,\; \alpha_f,\alpha_g,\alpha_h \geq 0
\end{equation}

and yield the following generic speedup equation (for $N>1$)\footnote{The applicability of this speedup equation to $N=1$ would require setting $c_h=1$. In order to allow using an arbitrary coefficient in the power function $h(\cdot)$, I require $N$ to be larger than $1$.}
\begin{equation}
\label{eq:genericSpeedupEquation}
S(N)=\frac{s\cdot f(N)+p\cdot g(N)}{s\cdot f(N)+\frac{p\cdot g(N)}{h(N)}+z(N)}=\frac{\pol{s \cdot c_f}{\alpha_f} + \pol{p \cdot c_g}{\alpha_g}}{\pol{s \cdot c_f}{\alpha_f} + \frac{\pol{p \cdot c_g}{\alpha_g}}{\pol{c_h}{\alpha_h}}}=\frac{\pol{s \cdot c_f}{\alpha_f} + \pol{p \cdot c_g}{\alpha_g}}{\pol{s \cdot c_f}{\alpha_f} + \pol{\frac{p \cdot c_g}{c_h}}{(\alpha_g - \alpha_h)}}
\end{equation}

and the following efficiency equation (for $N>1$):
\begin{equation}
\label{eq:genericEfficiencyEquation}
E(N)=\frac{S(N)}{N}=\frac{\pol{s \cdot c_f}{\alpha_f} + \pol{p \cdot c_g}{\alpha_g}}{\pol{s \cdot c_f}{(\alpha_f + 1)} + \pol{\frac{p \cdot c_g}{c_h}}{(\alpha_g - \alpha_h +1)}}
\end{equation}

The generic speedup equation given in (\ref{eq:genericSpeedupEquation}) generalizes several well-known speedup equations and laws suggested in the literature (see Table \ref{tab:generalizationSpeedupModelsLiterature}).

\begin{table}
\center
\caption{Instantiations of generic speedup model}
\label{tab:generalizationSpeedupModelsLiterature}\begin{spacing}{1.2}
\begin{tabular}{|l|c|p{6cm}|}
\hline
Parameter values & Speedup equation & Speedup model or law\\
\hline
%and in Gustafson's law \citep{gustafson1988reevaluating} (set $f(N)=1, g(N)=h(N)=N, z(N)=0$), and the generalized scaled speedup model of \citet{juurlink2012amdahl} (set
%$f(N)=1, g(N)=N^{0.5}, h(N)=N, z(N)=0$). 
 $c_f=c_g=c_h=\alpha_h=1, \alpha_f=\alpha_g=0$ & $S(N)=\frac{s + p}{s + \frac{p}{N}}$ & \textit{Amdahl's law} \citep{amdahl1967validity}\\[5mm]
 $c_f=c_g=c_h=\alpha_g=\alpha_h=1, \alpha_f=0$ & $S(N)=\frac{s + p \cdot N}{s + p}$ & \textit{Gustafson's law} \citep{gustafson1988reevaluating}\\[5mm]
$c_h=\alpha_h=1$ & $S(N)=\frac{s \cdot \pol{c_f}{\alpha_f} + p \cdot \pol{c_g}{\alpha_g}}{s \cdot \pol{c_f}{\alpha_f} + p \cdot \pol{c_g}{(\alpha_g-1)}}$ & \textit{Scaled speedup model} \citep{schmidt2017parallel}\\[5mm] 
& & (under the assumption that the sequential and parallel workloads are given by power functions $\pol{c_f}{\alpha_f}$ and $\pol{c_g}{\alpha_g}$, resp.)\\[5mm]
$c_f=c_g=c_h=\alpha_h=1, \alpha_f = 0$ & $S(N)=\frac{s + p \cdot \pol{}{\alpha_g}}{s + p \cdot \pol{}{(\alpha_g-1)}}$ & \textit{Sun and Ni's law} \citep{sun2010reevaluating,sun1990another,sun1993scalable}\\
& & (under the assumption that the parallel workload is given by a power function $\pol{}{\alpha_g}$)\\[5mm]  
$c_f=c_g=c_h=\alpha_h=1, \alpha_f = 0, \alpha_g=\frac{1}{2}$ & $S(N)=\frac{s + p \cdot \pol{}{\frac{1}{2}}}{s + \frac{p}{\pol{}{\frac{1}{2}}}}$ & \textit{Generalized scaled speedup model} \citep{juurlink2012amdahl}\\[5mm]
 \hline
\end{tabular}
\end{spacing}
\end{table}

\section{Theoretical speedup and efficiency limits}
\label{sec:theoreticalSpeedupEfficiencyLimits}

\subsection{Asymptotic speedup}
\label{sec:asymptoticSpeedup}

As I am interested in asymptotic speedup, I determine limits for $N\rightarrow \infty$. I rewrite the generic speedup equation (\ref{eq:genericSpeedupEquation}) as follows:
\begin{equation}
\label{eq:genericSpeedupEquationRewritten}
 S(N)=\underbrace{\frac{\pol{s \cdot c_f}{\alpha_f} }{\pol{s \cdot c_f}{\alpha_f} + \pol{\frac{p \cdot c_g}{c_h}}{(\alpha_g - \alpha_h)}}}_{\substack{(I)}} + \underbrace{\frac{\pol{p \cdot c_g}{\alpha_g}}{\pol{s \cdot c_f}{\alpha_f} + \pol{\frac{p \cdot c_g}{c_h}}{(\alpha_g - \alpha_h)}}}_{\substack{(II)}}
\end{equation}

For term (I), I yield the following limits (the proof can be obtained from equations (\ref{eq:I_Def})-(\ref{eq:IDef_c_Final}) in Appendix \ref{appendixSec:A}):
\begin{empheq}[left={\lim_{N\rightarrow \infty} (I)=}\empheqlbrace]{align}
1&,\; \alpha_f > \alpha_g - \alpha_h \label{eq:I1}\\
\frac{s \cdot c_f}{s \cdot c_f + \frac{p \cdot c_g}{c_h}} & ,\; \alpha_f = \alpha_g - \alpha_h \label{eq:I2}\\
0                 & ,\; \alpha_f < \alpha_g - \alpha_h \label{eq:I3}
\end{empheq}

For term (II), I yield the following limits (the proof can be obtained from equations (\ref{eq:II_Def_a}) - (\ref{eq:II_Def_b3c_Final}) in Appendix \ref{appendixSec:A}):
\begin{empheq}[left={\lim_{N\rightarrow \infty} (II)=}\empheqlbrace]{align}
\frac{p \cdot c_g}{s \cdot c_f}&,\; \alpha_g = \alpha_f, \alpha_h > 0 \label{eq:II1}\\
\frac{p \cdot c_g}{s \cdot c_f + \frac{p \cdot c_g}{c_h}} & ,\; \alpha_g = \alpha_f, \alpha_h = 0 \label{eq:II2}\\
0 & ,\; \alpha_g < \alpha_f \label{eq:II3}\\
\infty \;(\Theta (N^{\alpha_h})) & ,\; \alpha_g > \alpha_f, \alpha_h > 0, \alpha_g - \alpha_h - \alpha_f \geq 0 \label{eq:II4}\\
\infty \;(\Theta (N^{(\alpha_g - \alpha_f)})) & ,\; \alpha_g > \alpha_f, \alpha_h > 0, \alpha_g - \alpha_h - \alpha_f < 0 \label{eq:II5}\\
c_h & ,\; \alpha_g > \alpha_f, \alpha_h = 0 \label{eq:II6}
\end{empheq}

Aggregating the above given limits for terms (I) and (II), yields the following limits for the speedup given in equations (\ref{eq:genericSpeedupEquation}) and (\ref{eq:genericSpeedupEquationRewritten}):

\begin{empheq}[left={\lim_{N \rightarrow \infty} S(N)=}\empheqlbrace]{align}
\frac{p \cdot c_g}{s \cdot c_f} + 1 = \frac{p \cdot c_g + s \cdot c_f}{s \cdot c_f}&,\; \alpha_g = \alpha_f, \alpha_h > 0 \label{eq:limSpeedup1}\\
& (\Leftrightarrow 0=\alpha_g-\alpha_f < \alpha_h)\nonumber\\
\frac{s \cdot c_f}{s \cdot c_f + \frac{p \cdot c_g}{c_h}} + \frac{p \cdot c_g}{s \cdot c_f + \frac{p \cdot c_g}{c_h}} = \frac{s \cdot c_f + p \cdot c_g}{s \cdot c_f + \frac{p \cdot c_g}{c_h}}& ,\; \alpha_g = \alpha_f, \alpha_h = 0 \label{eq:limSpeedup2}\\
& (\Leftrightarrow 0=\alpha_g-\alpha_f = \alpha_h)\nonumber\\
1 + 0 = 1& ,\; \alpha_g < \alpha_f \label{eq:limSpeedup3}\\
& (\Leftrightarrow \alpha_g-\alpha_f < 0)\nonumber\\
\infty \;(\Theta (N^{\alpha_h})) & ,\; \alpha_g > \alpha_f, \alpha_h > 0, \alpha_g - \alpha_h - \alpha_f \geq 0 \label{eq:limSpeedup4}\\
& (\Leftrightarrow 0<\alpha_h\leq\alpha_g-\alpha_f)\nonumber\\
\infty \;(\Theta (N^{(\alpha_g - \alpha_f)})) & ,\; \alpha_g > \alpha_f, \alpha_h > 0, \alpha_g - \alpha_h - \alpha_f < 0 \label{eq:limSpeedup5}\\
& (\Leftrightarrow 0<\alpha_g-\alpha_f < \alpha_h)\nonumber\\
0 + c_h = c_h& ,\; \alpha_g > \alpha_f, \alpha_h = 0 \label{eq:limSpeedup6}\\
& (\Leftrightarrow 0=\alpha_h<\alpha_g-\alpha_f)\nonumber
\end{empheq}

I now briefly discuss each of the six equations and refer to these as \textit{speedup cases}; a visual illustration of the speedup cases can be retrieved from Figure \ref{fig:speedupLimits}. 

\begin{figure}
\fbox{\includegraphics[width=\linewidth]{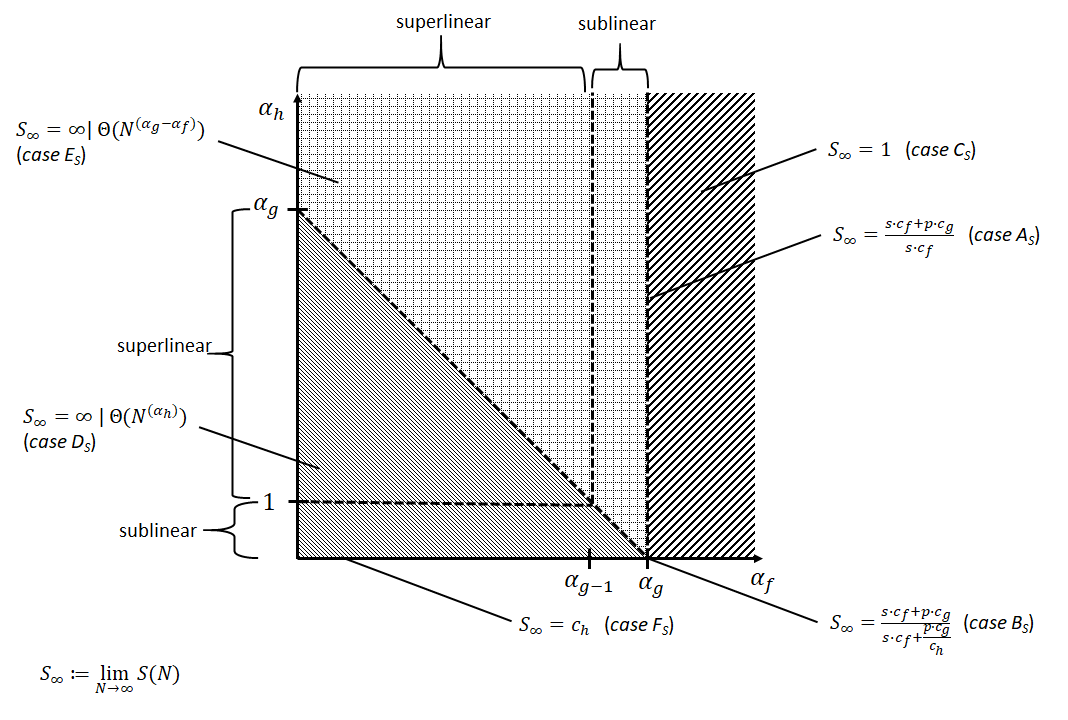}}
\caption{Overview of speedup limits}
\label{fig:speedupLimits}
\end{figure}

\begin{itemize}
  \item[Case $A_S$:] The speedup limit given in eq. (\ref{eq:limSpeedup1}) represents an upper bound for $S(N)\; \forall N > 1$ (see Appendix \ref{appendixSec:A}) and refers to situations in which the number $N$ of available PUs affects the time required to address the parallelizable workload (due to $\alpha_h >0$). While the speedup limit holds for any $c_h,\alpha_h >0$, it seems reasonable to assume that $h(N)=\pol{c_h}{\alpha_h} \geq 1$ holds as increasing the number of  PUs from $N=1$ to $N>1$ should not lead to an increase of time required to execute the parallelizable workload. However, the speedup limit in this case does not depend on the values of $c_h$ and $\alpha_h$. Also, this case assumes that the scaling functions $f(N)=\pol{c_f}{\alpha_f}$ and $g(N)=\pol{c_g}{\alpha_g}=\pol{c_g}{\alpha_f}$ change the serial and parallelizable workloads using the same factor $\pol{}{\alpha_f}$. It should be noticed that \textit{case $A_S$} results in \textit{Amdahl's law} \citep{amdahl1967validity} when setting $c_f=c_g=c_h=\alpha_h=1, \alpha_f=\alpha_g=0$; then, the limit on speedup amounts to $\frac{1}{s}$. As the speedup model of Amdahl's law is a special case of the memory-bound model suggested in \citep{sun1990another,sun1993scalable} (\textit{Sun and Ni's law}) under the assumption that the parallelizable workload in the memory-bound model is given by a power function $\pol{}{\alpha_g}$, \textit{case $A_S$} partly covers the abovementioned model. This case also (partly) covers the scaled workload model of \citet{schmidt2017parallel} under the assumption that the sequential and parallelizable workloads are given by power functions $\pol{c_f}{\alpha_f}$ and $\pol{c_g}{\alpha_g}$, resp., with $\alpha_f=\alpha_g$.
 \item[Case $B_S$:] The speedup limit given in eq. (\ref{eq:limSpeedup2}) represents an upper bound for $S(N)\; \forall N > 1$ (see Appendix \ref{appendixSec:A}). It refers to the same situation as described in \textit{case $A_S$} with the modification that, here, $N$ does not affect the time required to address the parallelizable workload (due to $\alpha_h =0$); that time is rather modified through a division by the scalar $c_h$; i.e., when executing the parallelizable workload in parallel, the corresponding time changes are determined by a constant factor $\alpha_h$. It seems reasonable to assume that $c_h \geq 1$ holds in this case (cmp. the analogeous discussion of \textit{case $A_S$}), with a resulting speedup limit of $\lim_{N \rightarrow \infty} S(N) \geq 1$. While \textit{case $B_S$} seems not useful under the premise that the parallelizable workload is infintely parallelizable, it becomes useful when this assumption is replaced by the expectation that a given parallelizable workload can be executed in parallel only on a limited number $N_{max}$ PUs; then, $\alpha_h$ may represent this limitation. For a discussion of limited parallelization, see, for example, \citep[p. 772ff]{cormen2022introduction} and \citep[p. 141ff]{al2020amdahl}).   

  \item[Case $C_S$:] When the increase of serial workload is asymptotically higher than that of parallelizable workload ($\alpha_f > \alpha_g$), speedup converges against $1$ (see eq. (\ref{eq:limSpeedup3})) as a lower bound regardless of the values of $c_h$ and $\alpha_h$; i.e., in this case, any parallelization does not reduce the overall execution time asymptotically due to the ``dominant"' increase of the serial workload. Case $C_S$ (partly) covers the scaled workload model of \citet{schmidt2017parallel} under the assumption that the sequential and parallelizable workloads are given by power functions $\pol{c_f}{\alpha_f}$ and $\pol{c_g}{\alpha_g}$, resp., with $\alpha_f<\alpha_g$.
   
  \item[Case $D_S$:] This case covers situations in which speedup is not limited and increases asymptotically with $\order{N^{\alpha_h}}$ for (i) $\alpha_g > \alpha_f$, (ii) $\alpha_h>0$, and (iii) $\alpha_g - \alpha_h \geq \alpha_f$ (see eq. (\ref{eq:limSpeedup4})); (i) ensures that the parallelizable workload increases faster than the sequential workload, (ii) ensures that parallelization asymptotically reduces the time required to execute the parallelizable workload, and (iii) ensures that the temporal effect induced through the joint growth of the parallelizable workload and its actual parallel execution ($\pol{}{(\alpha_g - \alpha_h)}$) is not weaker than the temporal effect induced through the growth of the sequential workload ($\pol{}{\alpha_f}$). Depending on the value of $\alpha_h$, speedup asymptotically grows sublinearly ($0 < \alpha_h < 1$), linearly ($\alpha_h = 1$), or superlinearly ($\alpha_h > 1$). It should be noticed that \textit{case $D_S$} results in \textit{Gustafson's law} \citep{gustafson1988reevaluating} when setting $c_f=c_g=c_h=\alpha_g=\alpha_h=1, \alpha_f=0$; then, the speedup asymptotically grows linearly. \textit{Case $D_S$} (partly) cover \textit{Sun and Ni's} law when setting $c_f=c_g=c_h=\alpha_h=1, \alpha_f = 0, \alpha_g \geq 1$ (under the assumption that the parallelizable workload in this model is given by a power function $\pol{}{\alpha_g}$). Finally, case $D_S$ (partly) covers the scaled workload model of \citet{schmidt2017parallel} under the assumption that the sequential and parallelizable workloads are given by power functions $\pol{c_f}{\alpha_f}$ and $\pol{c_g}{\alpha_g}$, resp., with $\alpha_g - \alpha_f \geq 1$.
    
  Interestingly, \textit{case $D_S$} may help explaining superlinear speedup as is has been observed in research on mathematical optimization (at least for a limited range of $N$)  \citep{ponz2017parallel,borisenko2011optimal,Rauchecker_Schryen_2019,Gonggiatgul2023}, for example.  
  
  \item[Case $E_S$:] Similar to \textit{case $D_S$}, \textit{case $E_S$} refers to situations in which speedup is not limited and increases asymptotically, but now with $\order{N^{(\alpha_g - \alpha_f)}}$ for (i) $\alpha_g > \alpha_f$, (ii) $\alpha_h>0$, and (iii) $\alpha_g - \alpha_h < \alpha_f$ (see eq. (\ref{eq:limSpeedup5})). The conditions under which \textit{case $E_S$} differ from those in \textit{case $D_S$} only with regard to (iii); i.e., here, the temporal effect induced through the joint growth of the parallelizable workload and its actual parallel execution ($\pol{}{(\alpha_g - \alpha_h)}$) is weaker than the temporal effect induced through the growth of the sequential workload ($\pol{}{\alpha_f}$). Now, the difference $(\alpha_g - \alpha_f)$ determines the asymptotic growth of speedup: it asymptotically grows sublinearly ($0 < \alpha_g - \alpha_f < 1$), linearly ($\alpha_g - \alpha_f = 1$), or superlinearly ($\alpha_g - \alpha_f > 1$).
  
  Similarly to \textit{case $D_S$}, \textit{case $E_S$} (partly) includes speedup models and laws suggested in the literature: \textit{Case $E_S$} (partly) covers \textit{Sun and Ni's law} when setting $c_f=c_g=c_h=\alpha_h=1, \alpha_f = 0, \alpha_g < 1$ (under the assumption that the parallelizable workload in this model is given by a power function $\pol{}{\alpha_g}$). With $\alpha_g=\frac{1}{2}$, Sun and Ni's model becomes the  ``generalized scaled speedup" model suggested in \citep{juurlink2012amdahl}; thus, \textit{case $E_S$} also covers the generalized scaled speedup model. Finally, case $E_S$ also (partly) includes the model of \citet{schmidt2017parallel} under the assumption that the sequential and parallelizable workloads are given by power functions $\pol{c_f}{\alpha_f}$ and $\pol{c_g}{\alpha_g}$, resp., with $0 < \alpha_g - \alpha_f < 1$); then, speedup grows asymptotically sublinearly with $\order{N^{(\alpha_g - \alpha_f)}}$.    

As \textit{case $D_S$}, also \textit{case $E_S$} may help explaining superlinear speedup.
 
  \item[Case $F_S$:] This case covers situations in which speedup converges to $c_h$ for (i) $\alpha_g > \alpha_f$ and (ii) $\alpha_h=0$ (see eq. (\ref{eq:limSpeedup6})). For $c_h \geq 1$, the limit $c_h$ is a lower bound; for $c_h \leq 1$, the limit $c_h$ is an upper bound. Condition (i) ensures that the parallelizable workload increases faster than the sequential workload, and with condition (ii) I assume that $N$ does not affect the time required to address the parallelizable workload (due to $\alpha_h =0$).  As with \textit{case $B_S$}, \textit{case $F_S$} is useful with the expectation that a given parallelizable workload can be executed in parallel only on a limited number $N_{max}$ of PUs. 
 
\end{itemize}  

%Notice that the limits given by (\ref{eq:limSpeedup1}) and (\ref{eq:limSpeedup2}) represent upper speedup bounds while the limits given by (\ref{eq:limSpeedup3}) and (\ref{eq:limSpeedup6}) still need further considerations in this regard. 

\subsection{Asymptotic efficiency}
\label{sec:asymptoticEfficiency}

In order to determine theoretical efficiency limits, I proceed analogously to the determination of speedup limits. I rewrite the generic efficiency equation (eq. (\ref{eq:genericEfficiencyEquation})) as follows:
\begin{equation}
\label{eq:genericEfficiencyEquationRewritten}
 E(N)=\underbrace{\frac{\pol{s \cdot c_f}{\alpha_f} }{\pol{s \cdot c_f}{(\alpha_f + 1)} + \pol{\frac{p \cdot c_g}{c_h}}{(\alpha_g - \alpha_h + 1)}}}_{\substack{(I')}} + \underbrace{\frac{\pol{p \cdot c_g}{\alpha_g}}{\pol{s \cdot c_f}{(\alpha_f + 1)} + \pol{\frac{p \cdot c_g}{c_h}}{(\alpha_g - \alpha_h + 1)}}}_{\substack{(II')}}
\end{equation}

For term (I'), I yield the following limit (see equations (\ref{eq:I_eff_1}) - (\ref{eq:I_eff_3}) in Appendix \ref{appendixSec:B}):
\begin{equation}
\label{eq:I_eff}
\lim_{N\rightarrow \infty} (I')=0
\end{equation}

For term (II'), I yield the following limits (the proof can be obtained from equations (\ref{eq:II_eff_Def_a}) - (\ref{eq:II_eff_10}) in Appendix \ref{appendixSec:B}):
\begin{empheq}[left={\lim_{N\rightarrow \infty} (II')=}\empheqlbrace]{align}
0 &,\; \alpha_f > \alpha_g - 1 \label{eq:II_eff_Final1}\\
& (\Leftrightarrow \alpha_g-\alpha_f < 1)\nonumber\\
0 &,\; \alpha_f = \alpha_g - 1, 0 \leq \alpha_h < 1 \label{eq:II_eff_Final2}\\
& (\Leftrightarrow 0\leq \alpha_h < \alpha_g-\alpha_f = 1)\nonumber\\
\frac{p \cdot c_g}{s \cdot c_f + \frac{p \cdot c_g}{c_h}} &,\; \alpha_f = \alpha_g - 1, \alpha_h = 1 \label{eq:II_eff_Final3}\\
& (\Leftrightarrow \alpha_g-\alpha_f = \alpha_h = 1)\nonumber\\
\frac{p \cdot c_g}{s \cdot c_f} &,\; \alpha_f = \alpha_g - 1, \alpha_h > 1 \label{eq:II_eff_Final4}\\
& (\Leftrightarrow \alpha_g-\alpha_f = 1 < \alpha_h)\nonumber\\
0 &,\; \alpha_f < \alpha_g - 1, 0 \leq \alpha_h < 1 \label{eq:II_eff_Final5}\\
& (\Leftrightarrow 0<\alpha_h < 1 < \alpha_g-\alpha_f )\nonumber\\
c_h &,\; \alpha_f < \alpha_g - 1, \alpha_h = 1 \label{eq:II_eff_Final6}\\
& (\Leftrightarrow 1 = \alpha_h <\alpha_g-\alpha_f)\nonumber\\
\infty \;(\Theta (N^{\alpha_g - \alpha_f -1})) & ,\; \alpha_f < \alpha_g - 1, \alpha_h > 1, \alpha_f > \alpha_g - \alpha_h \label{eq:II_eff_Final7}\\
& (\Leftrightarrow 1<\alpha_g-\alpha_f < \alpha_h)\nonumber\\
\infty \;(\Theta (N^{\alpha_h -1})) & ,\; \alpha_f < \alpha_g - 1, \alpha_h > 1, \alpha_f \leq \alpha_g - \alpha_h \label{eq:II_eff_Final8}\\
& (\Leftrightarrow 1<\alpha_h \leq \alpha_g-\alpha_f)\nonumber
\end{empheq}

\newpage
With $\lim_{N\rightarrow \infty} (I')=0$, I yield the following limits for efficiency equations (\ref{eq:genericEfficiencyEquation}) and (\ref{eq:genericEfficiencyEquationRewritten}):
\begin{equation}
\label{eq:limEfficiency1}
\lim_{N \rightarrow \infty} E(N) = \lim_{N \rightarrow \infty} \left[(I') + (II')\right] = \lim_{N \rightarrow \infty} (II')
\end{equation}

I now briefly discuss each of the eight equations and refer to these as \textit{(efficiency) cases}; a visual illustration of the efficiency cases can be retrieved from Figure \ref{fig:efficiencyLimits} which, unsurprisingly, shows structural similarities with the visual representation of speedup limits due to the relationship between efficiency and speedup as given by $E(N)=\frac{S(N)}{N}$. 

\begin{figure}
\fbox{\includegraphics[width=\linewidth]{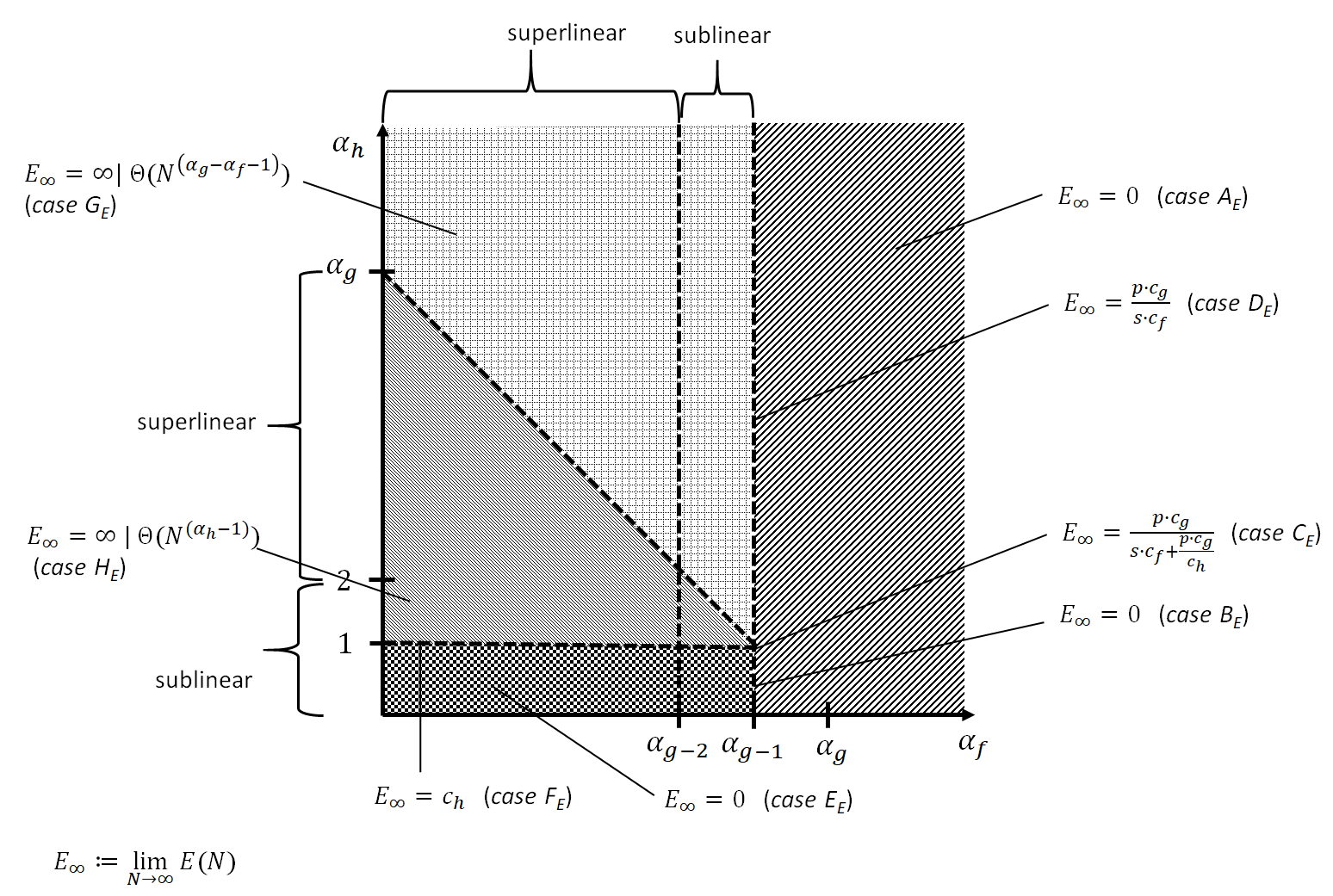}}
\caption{Overview of efficiency limits}
\label{fig:efficiencyLimits}
\end{figure}

\begin{itemize}
  \item[Case $A_E$:] The efficiency limit given in eq. (\ref{eq:II_eff_Final1}) equals zero regardless of the value of $\alpha_h$ and apparently represents a lower bound for $E(N)\; \forall N > 1$. This case refers to a situation in which the increase of the serial workload is asymptotically higher than that of a(n) (adjusted) parallelizable workload (adjusted based upon a decrease of the number of PUs by $1$) ($\alpha_f > \alpha_g - 1$).  
  
  \item[Case $B_E$:] The efficiency limit given in eq. (\ref{eq:II_eff_Final2}) equals zero when the value of $\alpha_h$ is upper bounded by $1$. Again, it apparently represents a lower bound for $E(N)\; \forall N > 1$. This case refers to a situation in which the ratio of the increases of serial workload and adjusted parallelizable workload converges against the constant $\frac{c_f}{c_g}$ and in which the increase of time reduction of executing the parallelizable workload is sublinear in $N$ ($\alpha_h < 1$). 
   
  \item[Case $C_E$:] The efficiency limit given in eq. (\ref{eq:II_eff_Final3}) describes a situation that differs from that in \textit{case $B_E$} only in that the increase of time reduction of executing the parallelizable workload is now linear in $N$ ($\alpha_h=1$). Then, the limit of efficiency is given by a constant larger than $0$ assuming that the parallelizable workload is positive ($p>0$).   
  
   \item[Case $D_E$:] The efficiency limit given in eq. (\ref{eq:II_eff_Final4}) describes a situation that differs from that in \textit{case $B_E$} only in that the increase of time reduction of executing the parallelizable workload is now superlinear in $N$ ($\alpha_h>1$). Despite this increase of time reduction and due to the relatively large increase of the serial workload compared to that of the parallelizable workload ($\alpha_f = \alpha_g - 1$), the limit of efficiency is still given by a constant (larger than $0$) assuming that the parallelizable workload is positive ($p>0$).
   
  \item[Case $E_E$:] The efficiency limit given in eq. (\ref{eq:II_eff_Final5}) describes a situation that is similar to that of \textit{case $B_E$}. While $\alpha_h < 1$ holds again, the (adjusted and the non-adjusted) parallelizable workload grows faster than the serial workload ($\alpha_f < \alpha_g -1$). However, as the increase of time reduction of executing the parallelizable workload is sublinear in $N$ ($\alpha_h<1$), efficiency converges against $0$. 
  
  \item[Case $F_E$:] In contrast to \textit{case $E_S$}, the efficiency limit given in eq. (\ref{eq:II_eff_Final6}) describes a situation in which the increase of time reduction of executing the parallelizable workload is linear in $N$ ($\alpha_h = 1$). Then, efficiency asymptotically equals a constant larger than $0$ (assuming $p > 0$).
   
 \item[Case $G_E$:] One situation in which efficiency is unbounded is described in eq. (\ref{eq:II_eff_Final7}), where the (adjusted and the non-adjusted) parallelizable workload grows faster than the serial workload ($\alpha_f < \alpha_g -1$) and the increase of time reduction of executing the parallelizable workload is superlinear in $N$ ($\alpha_h > 1$). When also $\alpha_f > \alpha_g - \alpha_h$ holds, efficiency grows asymptotically with $\Theta (N^{\alpha_g - \alpha_f -1})$; i.e., the asymptotic growth does not depend on $\alpha_h$. For $\alpha_g - \alpha_f > 1$, $\alpha_g - \alpha_f = 1$ and $\alpha_g - \alpha_f < 1$, efficiency is superlinear, linear and sublinear, respectively.

\item[Case $H_E$:] A second situation in which efficiency is unbounded is described in eq. (\ref{eq:II_eff_Final8}), where the (adjusted and the non-adjusted) parallelizable workload grows faster than the serial workload ($\alpha_f < \alpha_g -1$) and the increase of time reduction of executing the parallelizable workload is superlinear in $N$ ($\alpha_h > 1$). When also $\alpha_f \leq \alpha_g - \alpha_h$ holds, efficiency grows asymptotically with $\Theta (N^{\alpha_h - 1})$; i.e., the asymptotic growth does now depend on $\alpha_h$. For $\alpha_h > 1$, $\alpha_h = 1$ and $0 < \alpha_h < 1$, efficiency is superlinear, linear and sublinear, respectively.

\end{itemize}  

\subsection{Asymptotic scalability}
\label{sec:asymptoticScalability}

In the previous subsections, I identify \textit{speedup cases} and \textit{efficiency cases}. Now, I consider speedup cases and efficiency cases jointly, which results in various speedup-efficiency cases. I refer to these cases as \textit{scalability cases}, which are defined by both speedup and efficiency limits (see Table \ref{tab:scalabilityCases} and Figure \ref{fig:scalabilityLimits}). I assign to each scalability case a scalability type, which describes both speedup (as first parameter) and efficiency (as second parameter), using the following semantics:

\begin{itemize}
	\item $\beta_{h}, \gamma_{h}$: fixed values which depend on the reduced parallel workload scaling function $h$
    \item $\infty_{f,g}$: unbounded and monotonically increasing; the extent of increase depends on workload scaling functions $f$ and $g$
	\item $\beta_{s,f,g}, \gamma_{s,f,g}$: fixed values which depend on the sequential part $s$ (note: $p=1-s$) and the workload scaling functions $f$ and $g$
	\item $\beta_{s,f,g,h}, \gamma_{s,f,g,h}$: fixed values which depend on the sequential part $s$ (note: $p=1-s$), the workload scaling functions $f$ and $g$, and the reduced parallel workload scaling function $h$
	\item $\infty_{h}$: unbounded and monotonically increasing; the extent of increase depends on reduced parallel workload scaling function $h$
\end{itemize}  

Each scalability type refers to exactly one scalability case and set of conditions (see Table \ref{tab:scalabilityCases}).  

\begin{sidewaystable}
\center
\caption{Scalability cases}
\label{tab:scalabilityCases}
\begin{spacing}{1.2}
\begin{tabular}{|c|cc|ll|c|p{3cm}|}
\hline 
\multicolumn{3}{|c|}{Cases} & \multicolumn{2}{c|}{Limit values} & \multirow{2}{*}{Type} & \multirow{2}{*}{Conditions}\\ 
\cline{1-5} 
Scalability & Speedup & Efficiency & Speedup${}^{\ast}$ & Efficiency${}^{\ast \ast}$ & &\\ 
\hline 
$A_{SC}$ & $C_S $ & $A_E$ & $1$ & $0$ & $(1/0)$ & $\alpha_g -\alpha_f < 0$ \\ 
\hline 
$B_{SC}$ & $A_S$ & $A_E$ & $\beta:=\frac{s \cdot c_f + p \cdot c_g}{s \cdot c_f} > 1$ & $0$ & $(\beta_{ s,f,g}/0)$ & $0 = \alpha_g-\alpha_f<\alpha_h$  \\ 
\hline 
$C_{SC}$ & $B_S$ & $A_E$ & $\beta:=\frac{s \cdot c_f + p \cdot c_g}{s \cdot c_f + \frac{p \cdot c_g}{c_h}}$ & $0$ & $(\beta_{s,f,g,h}/0)$ & $0=\alpha_g-\alpha_f = \alpha_h$\\ 
\hline 
$D_{SC}$ & $F_S$ & $E_E$, $B_E$ & $\beta:=c_h$  & $0$ & $(\beta_h/0)$ & $0 = \alpha_h < \alpha_g - \alpha_f$\\ 
\hline 
$E_{SC}$ & \multirow{4}{*}{$D_S$} & $E_E$, $B_E$ & \multirow{4}{*}{$\infty \left( \Theta(N^{\alpha_h})\right)$} & $0$ & $(\infty_h/0)$ & $0 < \alpha_h < 1 \leq \alpha_g - \alpha_f$\\ 
\cline{1-1}\cline{3-3}\cline{5-6} 
$F_{SC}$ & & $H_E$& & $\infty \left( \Theta(N^{(\alpha_h - 1)})\right)$ & $(\infty_h/\infty_h)$ & $1 < \alpha_h \leq \alpha_g - \alpha_f$\\ 
\cline{1-1}\cline{3-3}\cline{5-6} 
$G_{SC}$ & & $C_E$ & & $\gamma:=\frac{p \cdot c_g}{s \cdot c_f + \frac{p \cdot c_g}{c_h}}$ & $(\infty_h/\gamma_{s,f,g,h})$ & $\alpha_h = \alpha_g - \alpha_f = 1$ \\ 
\cline{1-1}\cline{3-3}\cline{5-6} 
$H_{SC}$ & & $F_E$ & & $\gamma:=c_h$ & $(\infty_h/\gamma_h)$ & $ 1=\alpha_h < \alpha_g - \alpha_f$\\ 
\hline 
$I_{SC}$ & \multirow{3}{*}{$E_S$} & $G_E$ & \multirow{3}{*}{$\infty \left( \Theta(N^{(\alpha_g - \alpha_f)})\right)$} & $\infty \left( \Theta(N^{(\alpha_g - \alpha_f - 1)}\right)$ & $(\infty_{f,g}/\infty_{f,g})$ & $1 <  \alpha_g - \alpha_f < \alpha_h$ \\ 
\cline{1-1}\cline{3-3}\cline{5-6} 
$J_{SC}$ & & $A_E$& & $0$ & $(\infty_{f,g}/0)$ & $0 < \alpha_g - \alpha_f < \min{\left\{1,\alpha_h\right\}}$ \\ 
\cline{1-1}\cline{3-3}\cline{5-6}  
$K_{SC}$ & & $D_E$ & & $\gamma:=\frac{p \cdot c_g}{s \cdot c_f}$ & $(\infty_{f,g}/\gamma_{s,f,g})$ & $1 = \alpha_g - \alpha_f < \alpha_h$ \\ 
\hline
\multicolumn{7}{l}{\small ${}^{\ast}$: Values for cases $A_S$ and $B_S$ are upper bounds, value for case $C_S$ is lower bound, value for case $F_S$ is lower bound ($c_h \geq 1$) or upper bound ($c_h \leq 1$).}\\
\multicolumn{7}{l}{\small ${}^{\ast \ast}$: Values for cases $A_E$ to $G_E$ are lower bounds, the value for case $H_E$ is an upper bound (for sufficiently large values of $N$ ($c_h \cdot N \geq 1$)).}\\
\end{tabular} 
\end{spacing}
\end{sidewaystable}
\clearpage
  
\begin{figure}
\fbox{\includegraphics[width=\linewidth]{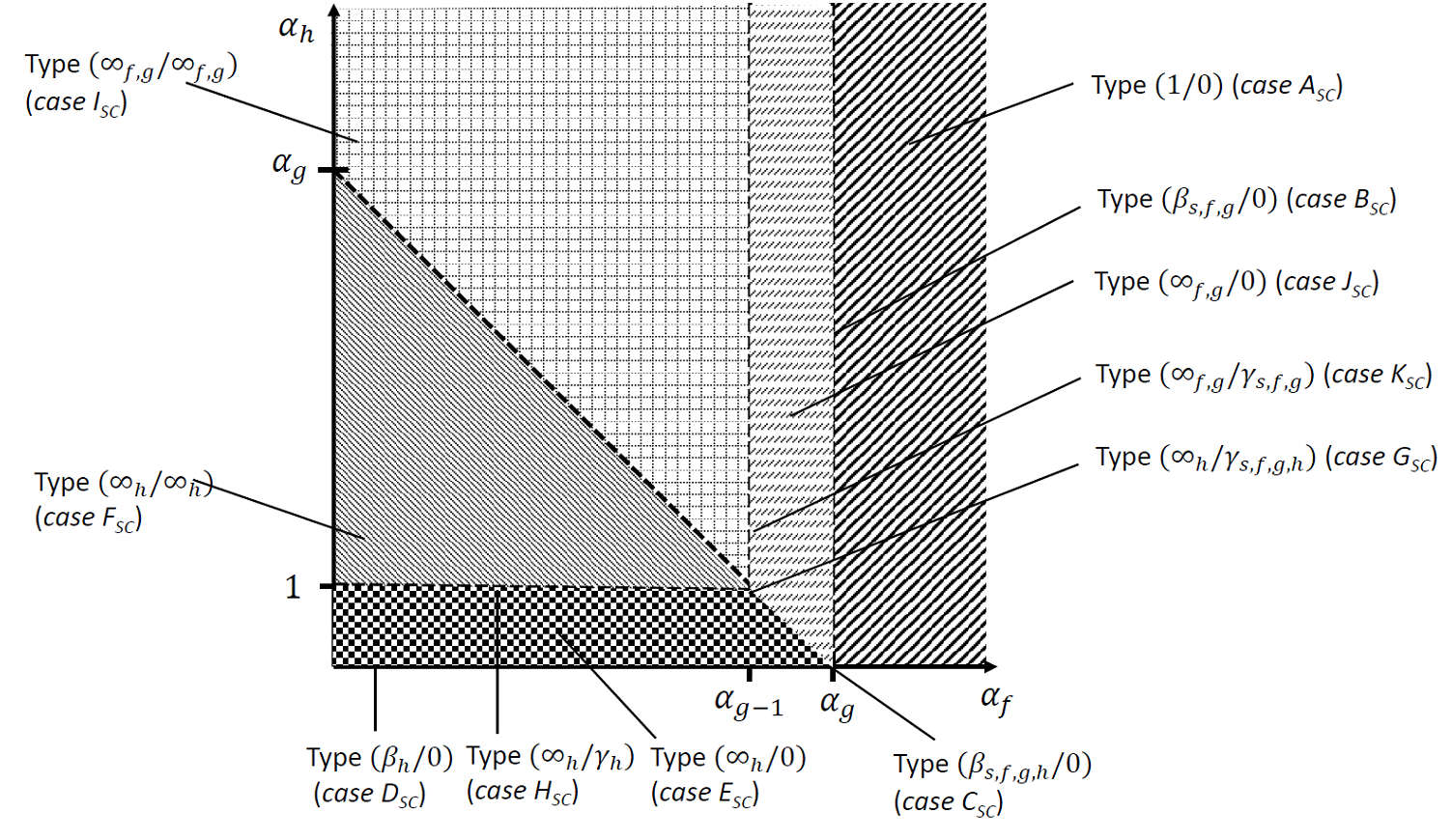}}
\caption{Overview of scalability cases}
\label{fig:scalabilityLimits}
\end{figure}

For the discussion of speedup, efficiency and scalability results derived in the preceding section, I recall the meaning of parameters $\alpha_f$, $\alpha_g$ and $\alpha_h$ since their values determine the (speedup, efficiency and scalability) case of a particular parallel algorithm: The parameters $\alpha_f$,  $\alpha_g$ and $\alpha_h$ affect the serial workload, the parallelizable workload and the actual reduction of parallelizable workload through parallelization, respectively, depending on the number of PUs $N$; they are given by $f(N):=c_f N^{\alpha_f}$, $g(N):=c_g N^{\alpha_g}$ and $h(N):=c_h N^{\alpha_h}$, respectively. I also recall the abovementioned speedup and efficiency equations:
\begin{displaymath}
%\label{eq:genericSpeedupEquation}
S(N)=\frac{\pol{s \cdot c_f}{\alpha_f} + \pol{p \cdot c_g}{\alpha_g}}{\pol{s \cdot c_f}{\alpha_f} + \pol{\frac{p \cdot c_g}{c_h}}{(\alpha_g - \alpha_h)}}\; \text{(see eq. (\ref{eq:genericSpeedupEquation}))}
\end{displaymath}
\begin{displaymath}
%\label{eq:genericEfficiencyEquation}
E(N)=\frac{\pol{s \cdot c_f}{\alpha_f} + \pol{p \cdot c_g}{\alpha_g}}{\pol{s \cdot c_f}{(\alpha_f + 1)} + \pol{\frac{p \cdot c_g}{c_h}}{(\alpha_g - \alpha_h +1)}}\; \text{(see eq. (\ref{eq:genericEfficiencyEquation}))}
\end{displaymath}

I now discuss each of the eleven scalability cases $A_{SC}$ to $K_{SC}$. As the definition of scalability cases (and types) is based upon combinations of speedup cases and efficiency cases, the characteristics of scalability cases (and types) can be derived from the above descriptions of the characteristics of speedup and efficiency cases. 

\begin{itemize}
  \item[Case $A_{SC}$:] With speedup case $C_S$, the increase of serial workload is asymptotically higher than that of parallelizable workload ($\alpha_f > \alpha_g$). Then, speedup converges against $1$. Case $C_S$ induces efficiency case $A_E$ so that the resulting asymptotic efficiency equals $0$. The scalability type is $(1,0)$. Overall, parallelization does not scale at all and parallelization efforts do not make much sense.
  
  \item[Case $B_{SC}$:] With speedup case $A_S$, the scaling functions $f$ and $g$ change the serial and parallelizable workloads using the same factor $\pol{}{\alpha_f}$ with possibly different values $c_f$ and $c_g$; furthermore, case $A_S$ refers to situations in which the number $N$ of available PUs affects the time required to address the parallelizable workload (due to $\alpha_h >0$). Then, speedup converges against a constant $\beta_{s,f,g}=\frac{s \cdot c_f + (1-s) \cdot c_g}{s \cdot c_f}>1$. Speedup case $A_S$ leads to efficiency case $A_E$; i.e., efficiency converges against zero. 
  
  Scalability case $B_{SC}$, which refers to scalability type $(\beta_{s,f,g},0)$, includes \textit{Amdahl's law} \citep{amdahl1967validity} and partly \textit{Sun and Ni's law}  \citep{sun1990another,sun1993scalable} (see the discusssion of speedup case $A_S$).   
   
  \item[Case $C_{SC}$:] This scalability case is similar to the scalability case $B_{SC}$ and differs from it only as $\alpha_h$ equals zero; i.e., $N$ does not affect the time required to address the parallelizable workload. With speedup case $B_S$ and resulting efficiency case $A_E$, the associated scalability type is $(\beta_{s,f,g,h},0)$, with speedup $\beta_{s,f,g,h}=\frac{s \cdot c_f + (1-s) \cdot c_g}{s \cdot c_f + \frac{(1-p) \cdot c_g}{c_h}}$. As discussed above, speedup case $B_S$, and thus scalability case $C_{SC}$, are not useful under the premise that the parallelizable workload is infinitely parallelizable, but it becomes useful when a given parallelizable workload can be executed in parallel only on a limited number of PUs. 
  
   \item[Case $D_{SC}$:] Scalability case $D_{SC}$, which corresponds to scalability type $(\beta_h/0)$, includes speedup case $F_S$, in which the parallelizable workload increases faster than the sequential workload ($\alpha_g > \alpha_f$) and $N$ does not affect the time required to address the parallelizable workload ($\alpha_h =0$). Then, speedup converges to $c_h$. When speedup case $F_S$ applies, either efficiency case $E_E$ or $B_E$ applies with efficency converging towards zero in both cases. As with scalability case $C_{SC}$, case $D_{SC}$ is useful with the expectation that a given parallelizable workload can be executed in parallel only on a limited number of PUs.
   
  \item[Case $E_{SC}$:] This scalability case refers to a situation in which (i) the parallelizable workload increases at least one magnitude faster than the sequential workload ($\alpha_g \geq \alpha_f + 1$) and (ii) the number of available PUs affects the time required to address the parallelizable workload with unlimited and sublinear growth ($0 < \alpha_h < 1$). This scalability case is linked to speedup case $D_S$ and one of the efficiency cases $E_E$ or $B_E$, resulting to scalability type ($\infty_h,0$). Case $E_{SC}$ involves a speedup growth of $\Theta(N^{\alpha_h})$; i.e., speedup convergence is determined by the reduced parallel workload scaling function $h$. Due to the condition $\alpha_h < 1$, this growth is sublinearly in $N$ and efficiency converges towards zero.  
  
  \item[Case $F_{SC}$:] This case describes to a situation in which the parallelizable workload increases more than one magnitude faster than the sequential workload ($\alpha_g \geq \alpha_f + 1$) and the number of available PUs affects the time required to address the parallelizable workload with superlinear growth ($1 < \alpha_h \leq (\alpha_g - \alpha_f))$). Under such conditions, speedup case $D_S$ and efficiency case $H_E$ apply, resulting in the scalability type $(\infty_h,\infty_h)$, where both speedup and efficiency are unlimited, speedup grows superlinearly with $\Theta(N^{\alpha_h})$, and efficiency grows sublinearly (when $1 < \alpha_h < 2$), linearly (when $\alpha_h = 2$), or superlinearly (when $2 < \alpha_h$).   
  
 \item[Case $G_{SC}$:] This scalability case describes a situation in which the parallelizable workload increases one magnitude faster than the sequential workload ($\alpha_g = \alpha_f + 1$) and the number of available PUs affects the time required to address the parallelizable workload with linear growth $( \alpha_h = 1$). Under such conditions, speedup case $D_S$ and efficiency case $C_E$ apply, resulting in the scalability type $(\infty_h,\gamma_{s,f,g,h})$, where speedup is unlimited and grows linearly and where efficiency converges against a constant $\gamma$ that depends upon functions $f,g,h$ and upon $s$ ($\gamma_{s,f,g,h}=\frac{p \cdot c_g}{s \cdot c_f + \frac{p \cdot c_g}{c_h}}$).
 
Scalability case $G_{SC}$ covers \textit{Gustafson's law} \citep{gustafson1988reevaluating}, where $\alpha_g=\alpha_h=1, \alpha_f=0$. It also (partly) covers the model of \citet{schmidt2017parallel}.            

\item[Case $H_{SC}$:] Scalability case $H_{SC}$ differs from scalability case $G_{SC}$ only in the regard that the parallelizable workload increases more than one magnitude faster than the sequential workload ($\alpha_g > \alpha_f + 1$). Similiar to case $G_{SC}$, the scalability type is $(\infty_h,\gamma_h)$ but here $\gamma_h$ is set to $c_h$ (efficiency case $F_E$).

Analogously to scalability case $G_{SC}$, also case $H_{SC}$ (partly) covers the model of \citet{schmidt2017parallel}. In addition, case $H_{SC}$ also (partly) covers \textit{Sun and Ni's law} when $c_f=c_g=c_h=1$ holds and  when the parallelizable workload in this model is given by a power function $N^{\alpha_g}$.  

\item[Case $I_{SC}$:] This scalability case applies when (i) the parallelizable workload increases at least one magnitude faster than the sequential workload ($\alpha_g > \alpha_f + 1$) and (ii) the number of available PUs affects the time required to address the parallelizable workload with unlimited and superlinear growth ($1<\alpha_g-\alpha_f < \alpha_h$). Under these conditions, speedup case $E_S$ and efficiency case $G_E$ apply, leading to scalability type $(\infty_{f,g},\infty_{f,g})$; i.e., both speedup and efficiency are unbounded and grow superlinearly.   

\item[Case $J_{SC}$:] The conditions under which this case apply differ from those of case $I_{SC}$ in that the parallelizable workload increases less than one magnitude faster than the sequential workload  ($0 < \alpha_g - \alpha_f < 1$). Then, speedup case $E_S$ and efficiency case $A_E$ apply, leading to scalability type $(\infty_{f,g},0)$; i.e., while speedup is unbounded and grow sublinearly, efficiency converges against zero.  

With $\alpha_f=0, \alpha_h=1$, it partly covers \textit{Sun and Ni's law}, and with $\alpha_f=0, \alpha_g=\frac{1}{2},\alpha_h=1$, it covers the speedup model of \citet{juurlink2012amdahl}.  

\item[Case $K_{SC}$:] When (i) the parallelizable workload increases one magnitude faster than the sequential workload ($\alpha_g = \alpha_f + 1$) and (ii) the number of available PUs affects the time required to address the parallelizable workload with unlimited and superlinear growth ($1< \alpha_h$), scalability case $K_{SC}$ applies with speedup case $E_S$ and efficiency case $D_E$; then, speedup is unlimited and grows linearly, efficiency converges against a constant $\gamma=\frac{p \cdot c_g}{s \cdot c_f}$, and scalability type $(\infty_{f,g},\gamma_{s,f,g})$ applies.  

\end{itemize}  

\section{Computational experiments}
\label{sec:computationalExperiments}

\color{red}

I demonstrate the application of the generic speedup and efficiency model and the scalability typology with two computational experiments. The first experiment targets strong scalability and performs parallel matrix multiplication with fixed workload (in terms of matrix sizes) in the Amdahl setting. The second experiment is more sophisticated and targets lower-upper (LU) decomposition as factorization of a matrix as the product of a lower triangular matrix and an upper triangular matrix; here I consider a variable workload (scaled-size model) that increases with the number of available PPUs. Both types of tasks occur in many problems in numerical analysis and linear algebra. The description of these experiments is intended to illustrate the proposed model and to give examples of how the polynomial functions of the generic speedup and efficiency model can be determined in practice.  

I ran all experiments on a compute node (2x AMD Milan 7763, 2.45 GHz, up to 3.5 GHz, 2x 64 cores, 256 GiB main memory) of the HPC cluster NOCTUA2 provided by the Paderborn Center for Parallel Computing of Paderborn University\footnote{\url{https://pc2.uni-paderborn.de/hpc-services/available-systems/noctua2}}. The source code was written in C++, using the OpenMP application programming interface for parallelization, and compiled with GCC (version 12.2.0)\footnote{I used the following compiler options: \textit{-march=native -m64 -fPIC -O3 -fopenmp}}.

\subsection{Parallel matrix multiplication with fixed workload}
\label{sec:ParMatMultFixedWork}

I perform parallel matrix multiplication without any optimization, i.e. rows and columns of the first and second matrices are multiplied in pairs (scalar product).
In my speedup and efficiency model I set $c_f=c_g=c_h=\alpha_h=1, \alpha_f=\alpha_g=0$. This setting results in the following speedup and efficiency equations:
\begin{equation}
\label{eq:example1}
   S(N)=\frac{1}{s+\frac{p}{N}},\quad E(N)=\frac{1}{N\cdot s + p}
\end{equation}

This instantiation of the generic speedup and efficiency model corresponds to scalability class $B_{SC}$ and scalability type $(\beta_{s,f,g}=1/s,0)$ (see Table \ref{tab:scalabilityCases}); i.e., asymptotic speedup and asymptotic efficiency are given by $1/s$ and $0$, respectively.
I set the sizes of the two integer matrices $A$ and $B$ to $[1.28 \cdot 10^{10},300]$ and $[200,1.28 \cdot 10^{10}]$, respectively, and initialized both matrices with random values in the interval $[-1000,1000]$. The large sizes of the matrices are motivated by the fact that parallelization may then be useful. I also set the number of rows to a multiple of the maximum number of cores available on the target machine ($128$), to allow an equal distribution of the parallelizable workload when $128$ (or lower powers of $2$) cores are used. This justifies setting $h(N):=N$. I determined the sequential and parallelizable workloads $s=0.023595$ and $p=0.976405$, respectively, by executing the code on a single core. Not surprisingly, in a matrix multiplication setting, the serial workload is quite small. Note, however, that even for a value as low as $s=0.023595$, the speedup is limited by $1/s\approx 42$.

\begin{figure}[pos=htb]
                    \centering
                        \begin{tabular}{cc}
                            (a) Speedup  & (b) Efficiency\\
                            \includegraphics[width=0.45\textwidth]{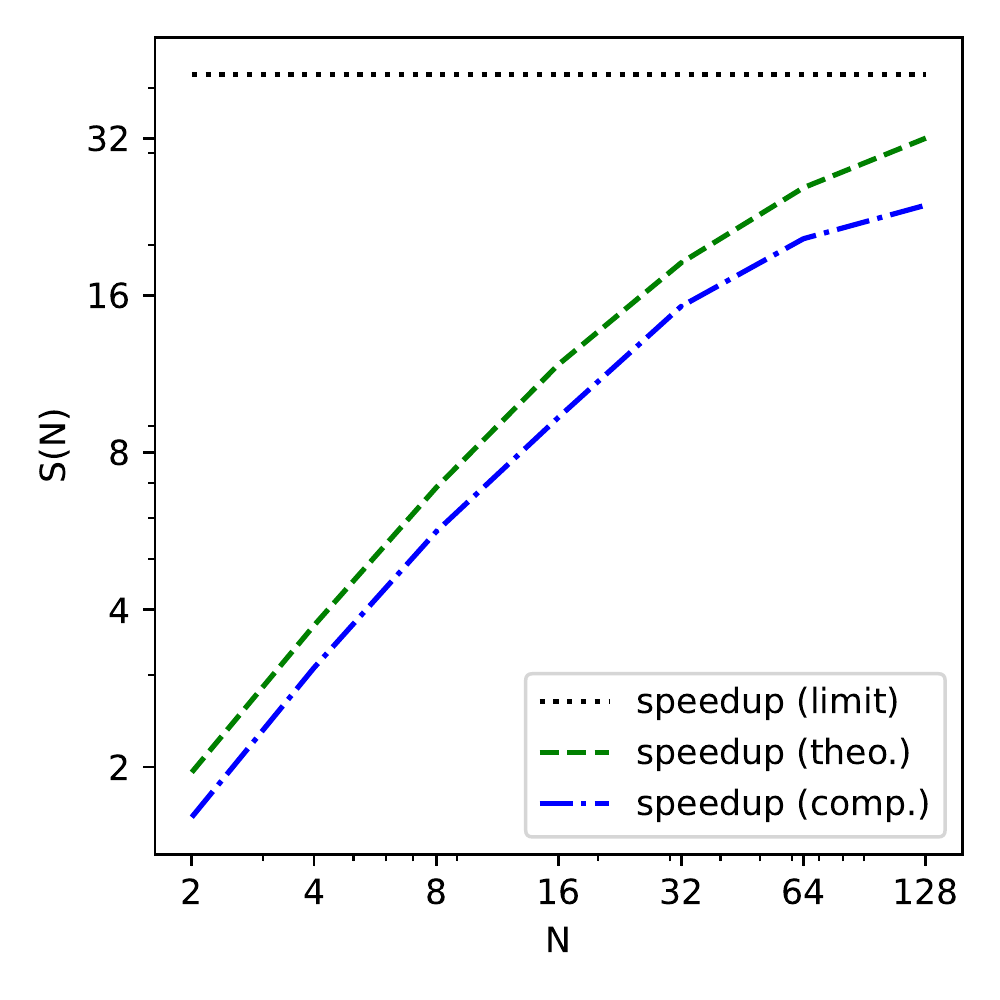}
                            & 
                            \includegraphics[width=0.51\textwidth]{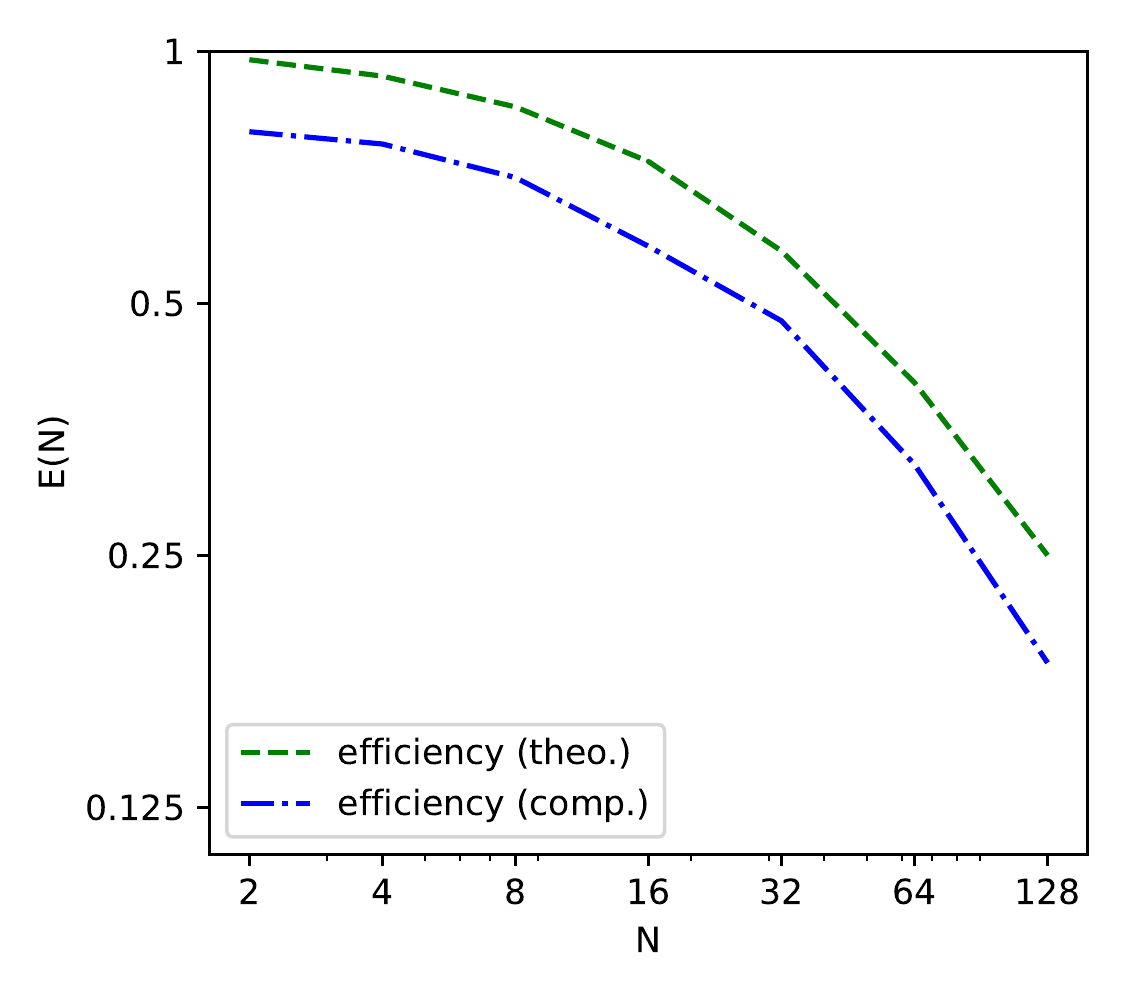}
                        \end{tabular}
                    \caption{Computational results of matrix multiplication with a fixed workload}
                    \label{fig:matMultFixedWork}
                \end{figure}

Figure \ref{fig:matMultFixedWork}a shows that the theoretical speedup, which considers the existence of a serial part $s>0$ but ignores any computational overhead induced by parallelization, converges to the speedup limit given by $1/s\approx 42$. The computational speedup, given by $T(1)/T(N)$ for any $N$, shows the actual speedup achieved. The gap between the two speedup lines is due to the overhead of parallelization. Figure \ref{fig:matMultFixedWork}b shows both the theoretical and the computational efficiency, defined as the ratios of the above speedup values and the number of parallel PUs $N$. While the decrease of the theoretical efficiency and its convergence to $0$ is due to the presence of a serial part $s>0$, the gap between the two efficiency curves is, as in the speedup case, due to the computational overhead induced by parallelization. The data for this experiment can be found in Table \ref{tab:resultsMatrixMultFixedSize} in Appendix \ref{sec:AppendixExperiment1}. 

\subsection{Parallel LU decomposition with variable workload}
\label{sec:ParLUDecomp}

I perform LU decomposition of an matrix without pivoting.  A description of the algorithm is shown in Figure \ref{fig:AlgLUDecomp}. The LU decomposition algorithm is only partially parallel. The outer loop (iteration over $i$) cannot be parallelized, since computations in iteration $i\;(i > 1)$ require computations in iteration $(i-1)$ to be completed. In contrast, the inner loops (iterating over $j$ and $l$) allow parallelization. In my implementation, the first inner loop (iteration over $j$) is parallelized. To account for variable workload, I assume that the number of rows in the matrix $A$ grows linearly with the number of PPUs.\footnote{Obviously, the size of $A$ in terms of the number of entries grows quadratically with the number of available PPUs.}.

\begin{figure}[htbp]
\centering
\fbox{
\begin{minipage}{\textwidth/3}
\begin{flushleft}
\begin{align*}
1. \quad & \textit{Input: } A \in \mathbb{N}^{z \times z},\,\mathbb{N}:=\{1,2,\ldots\} \\
2. \quad & L \leftarrow \text{identity matrix } I_n  \\
3. \quad & U \leftarrow A \\
4. \quad & \text{for } i = 1 \text{ to } z-1 \text{ do} \\
5. \quad & \quad \text{for } j = i+1 \text{ to } z \text{ do} \\
6. \quad & \quad\quad L_{j,i} \leftarrow \frac{U_{j,i}}{U_{i,i}} \\
7. \quad & \quad\quad \text{for } l = i+1 \text{ to } z \text{ do} \\
8. \quad & \quad\quad\quad U_{j,l} \leftarrow U_{j,l} - L_{j,i} \times U_{i,l} \\
9. \quad & \quad\quad \text{end for} \\
10. \quad & \quad\quad \text{for } l = 1 \text{ to } i \text{ do} \\
11. \quad & \quad\quad\quad U_{j,l} \leftarrow 0 \\
12. \quad & \quad\quad \text{end for} \\
13. \quad & \quad \text{end for} \\
14. \quad & \text{end for}\\
15. \quad & \textit{Output: } L,U \in \mathbb{N}^{z \times z},\, A=L \cdot U
\end{align*}
\end{flushleft}
\end{minipage}
}
\caption{LU decomposition algorithm (without pivoting)}
\label{fig:AlgLUDecomp}
\end{figure}

As in the first experiment, I assume that the serial workload remains constant as the number of PPUs varies; i.e., I assume $f(N)=1\, \; (N \geq 1)$. However, unlike the setting in the first experiment, the workload size now depends on the number of parallel PPUs $N$; i.e., one must determine an appropriate polynomial function $g(N)=c_g \cdot N^{\alpha_g}$. It is also reasonable to assume that the parallelizable workload $p \cdot g(N)$ cannot be partitioned into $N$ independent sub-workloads of equal size as the number of iterations in the $j$-loop is usually not a multiple of the number of available PPUs; i.e., the workload is not ``perfectly'' parallelizable and one needs to determine an appropriate polynomial function $h(N)=c_h \cdot N^{\alpha_h}$. I use analytical and numerical approaches to determine $g$ and $h$.

To determine $g$, I first determine the number of (basic) calculations $\hat{g}(z)$ (as performed in lines 6, 8, and 11 of the LU decomposition algorithm shown above) as a function of the number of matrix rows $z$. As shown in Appendix \ref{sec:AppendixExperiment2}, one obtains
\begin{equation}
\label{eq:numberCalcLUDecomp}
\hat{g}(z)= \frac{1}{3} \left( z^3-z \right)
\end{equation}
Assuming that a calculation requires $t_c$ units of computation time, the execution of (lines 4-14 of) the LU decomposition algorithm requires $\hat{g}(z) \cdot t_c$ units of computation time. As I assume that the number of rows/columns $z$ of the matrix $A$ (``problem size'') grows linearly with the number of PPUs $N$, $z$ is given by $z=z_1 \cdot N$, where $z_1$ is the number of rows when only a single processing unit is available. This leads to
\begin{equation}
\label{eq:numberCalcLUDecompN}
\hat{g}(z)=\hat{g}(N,z_1) =\hat{g}(z_1 \cdot N) = \frac{1}{3} \left( \left(z_{1}\cdot N)^{3}-(z_{1}\cdot N \right) \right) = \frac{1}{3} \left( z_{1}^{3} \cdot N^3 -z_1 \cdot N\right)
\end{equation}
For $N=1$ we get
\begin{equation}
\label{eq:numberCalcLUDecomp1}
\hat{g}(1,z_1)= \frac{1}{3} \left( z_{1}^{3} - z_1 \right),
\end{equation}
which is the number of calculations when only a single processor is available (i.e., $N=1$).  I can now determine a normalized workload scaling function $\overline{g}_{z_1}$ given by 
\begin{equation}
\label{eq:definitiongTempLUDecomp}
\overline{g}_{z_1}(N):= \frac{\hat{g}(N,z_1)}{\hat{g}(1,z_1)}
= \frac{z_{1}^{3} \cdot N^3 - z_1\cdot N}{z_{1}^{3} - z_1}
= \frac{z_{1}^{3}}{z_{1}^{3} - z_1} \cdot N^3 - \frac{z_{1}^{3}}{z_{1}^{3} - z_1} \cdot N = \Theta(N^3)
\end{equation}
Note that $\overline{g}$ is normalized as $\overline{g}_{z_1}(1)=1$. For the analysis of the asymptotic speedup, I do not need to consider the linear term any further, so I use as workload scaling function
\begin{equation}
\label{eq:definitiongLUDecomp}
g_{z_1}(N)=\frac{z_{1}^{3}}{z_{1}^{3} - z_1} \cdot N^3\end{equation}
Using the generic representation of the parallel workload scaling function $g(N)=c_g \cdot N^{\alpha_g}$, I get $c_g=z_1^3/ \left( z_1^3-z_1 \right)$ and $\alpha_g=3$. 

I now determine the scaling function $h$, which describes the extent to which the execution time of the parallelizable workload $p \cdot g_{z_1}(N)$ can theoretically be reduced by parallelization.  While the outer loop cannot be parallelized (see explanation above), the first inner loop (iteration over variable $j$) can be parallelized because the computations in the iterations do not affect each other. Assuming that this loop is parallelized, $(z-i)$ iterations can be executed in parallel, for each $i=1, \ldots, (z-1)$. Suppose $z=128$ and $N=64$ PPUs are available. Then, for example, in the $63$th iteration of the outer loop ($i=63)$, $(z-i)=65$ iterations of the first inner loop ($j=64, \ldots, 128$) can be distributed to $N=64$ PPUs. While $64$ of the $65$ iterations can be distributed evenly across $N=64$ PPUs, the remaining iteration is executed on a single PPU while the other $63$ PPUs remain idle. Obviously, the time required to execute the workload of this iteration of the outer loop is not divided by the number of available PPUs $N=64$. Therefore, $h(N)<N\;(N > 1)$.

The proof of equation (\ref{eq:numberCalcLUDecomp}) (see Appendix \ref{sec:AppendixExperiment2}) makes use of the fact that the number of calculations of the parallelizable part is given by
\begin{equation}
\label{eq:numberCalcLUDecompDecomposed}
\hat{g}(z)= \sum_{i=1}^{z-1} \left[ (z-i) \cdot (z-i+1) \right].
\end{equation}
Note that parallelizing the first inner loop means that $(z-i)$ blocks (of $(z-i+1)$ calculations each) can be executed in parallel by $N$ PPUs. Following my explanations of parallelization above, using $N$ PPUs requires a computation time equal to the computation time required in a serial execution for $\hat{g}_{reduced}(z)$ calculations, with
\begin{equation}
\label{eq:reducedg}
\hat{g}_{reduced}(z)= \sum_{i=1}^{z-1} \left[ \left\lceil \frac{(z-i)}{N} \right\rceil \cdot (z-i+1) \right].
\end{equation}
Assuming again that $z=N \cdot z_1$, then the computation time of the parallelizable workload executed on a single PPU is reduced by using $N$ PPUs by the factor $\hat{h}(z)=\hat{h}(N,z_1)$ given by
\begin{equation}
\label{equation}
\hat{h}(N,z_1) = \hat{h}(z) = \frac{\hat{g}(z)}{\hat{g}_{reduced}(z)} = \frac{\hat{g}(N,z_1)}{\hat{g}_{reduced}(N,z_1)}.
\end{equation}
Since it is difficult to find a closed-form expression of $\hat{g}_{reduced}$ and to derive the function $\hat{h}$ from it analytically, one could use a statistical approach to derive, for a given $z_1$, a polynomial function $h(N,z_1)=c_h \cdot N^{\alpha_h}\,(c_h>0,\alpha_h \geq 0)$ that approximates $\hat{h}$. For $z_1 \in \{10^i,\, i\in \{2,3,4,5\}\}$, Tables \ref{tab:dataPointsLRDecomp100}-\ref{tab:dataPointsLRDecomp100000} in Appendix \ref{sec:AppendixExperiment2} show data points for $N =2^i,\, i \in \{1,\ldots 20\}$. Note that $z_1$ is the number of rows/columns of the matrix to be decomposed when only a single PU is available, and that $c_h$ and $\alpha_h$ must be determined depending on $z_1$.

The computations show that for any value of $z_1$, the value of $\hat{h}(N,z_1)$ is close to $N$. It is reasonable to assume that, for any $z_1$, as $N$ goes to infinity, $(\hat{h}(N,z_1)-N)$ converges to $0$, or $(\hat{h}(N,z_1)/N)$ converges to 1, although I do not prove this here. However, from an asymptotic point of view, for any $z_1$ and any $c_h>0, \alpha_h<1$, $h(N,z_1)=c_h \cdot N^{\alpha_h}$ would converge to $0$. Therefore, I do not need to do a statistical analysis here and set $c_h=\alpha_h=1$; i.e., $h(N,z_1)=h(N)=N$. 

In summary, the scaling functions for the given LU decomposition algorithm are given by
\begin{align}
\label{eq:scalingFunctionsLUDecomp}
&f(N) = 1 &\qquad& (c_f=1,\alpha_f=0) \\
&g_{z_1}(N) = \frac{z_1^3}{z_1^3-z_1}\cdot N^3 && \left(c_g=\frac{z_1^3}{z_1^3-z_1}, \alpha_g=3\right) \notag \\
&h(N)=N && (c_h=\alpha_h=1), \notag
\end{align}

which yields as speedup equation (cmp. equation (\ref{eq:genericSpeedupEquation}))
\begin{equation}
\label{eq:SpeedupEquationLUDecomp}
S_{z_1}(N)=\frac{s+p\cdot \frac{z_1^3}{z_1^3-z_1}\cdot N^3}{s+p\cdot \frac{z_1^3}{z_1^3-z_1}\cdot N^2}
\end{equation}

Based on the proposed scalability typology shown in Table \ref{tab:scalabilityCases}, the LU decomposition algorithm is of scalability case $H_{SC}$ with scalability type $(\infty_h,\gamma_h)$; the speedup increases asymptotically with $\Theta(N)$ ($\lim_{N \rightarrow \infty} (S_{z_1}(N)-N)=0$) and the efficiency converges to $c_h=1$. Note that these results do not depend on the value of $z_1$.  

In the computational experiment, I initialize the matrix to be decomposed with (random and integer) values in the interval $[1,1000]$ to avoid any numerical problems, since the LU decomposition does not involve any pivoting. I set $z_1:=100$; thus, the number of rows and columns of the matrix to be decomposed is given by $z=N \cdot z_1=100\cdot N\,(N=1,2,4,\ldots, 128)$. The first experiments showed that, contrary to the theoretical analysis, the speedup tends to converge to $1$ with increasing values of $N$. This effect is due to the phenomenon that the amount of work done in the parallel region (lines 6-12 in the pseudocode shown in Figure \ref{fig:AlgLUDecomp}) is small relative to the overhead of creating and managing threads. To avoid this effect, I extended the LU decomposition algorithm to decompose a set of $m$ matrices $A_d\, (d=1,\ldots ,m)$ of equal size; this can be easily implemented by running the code in lines 6-12 for each of the $m$ matrices, so that the workload in the parallel region increases by a factor of $m$. Note that from a theoretical point of view, this modification does not affect the speedup and efficiency bounds, while it allows to demonstrate the speedup increase with increasing values of $N$ in computational experiments.  A second problem arose when determining the sequential and parallelizable fractions of the total execution time $s$ and $p$, respectively: since the matrix size (in terms of the number of rows/columns) grows linearly with $N$, and the number of computations to be performed in the LU decomposition algorithm grows asymptotically with $\Theta(n^3)$ (see equation (\ref{eq:definitiongTempLUDecomp})), the matrices to be decomposed become large. When only a single PU is used, the matrix to be decomposed must be relatively small in order to perform the experiments in a reasonable amount of time. As a consequence, the total execution time for decomposing a $(100 \times 100)$ matrix is small ($11$ ms), so the serial time $s$ is $0$ when traced computationally, since the numerical accuracy is too low. However, when $s$ is close to $0$, it hardly affects the theoretical (serial and parallel) computation times and speedups (see equation (\ref{eq:SpeedupEquationLUDecomp})); it also does not affect the speedup and efficiency bounds. Since in every parallel program at least a small part of the code is executed serially, I set $s:=0.01$.

\begin{figure}[pos=htb]
                    \centering
                        \begin{tabular}{cc}
                            (a) Speedup  & (b) Efficiency\\
                            \includegraphics[width=0.45\textwidth]{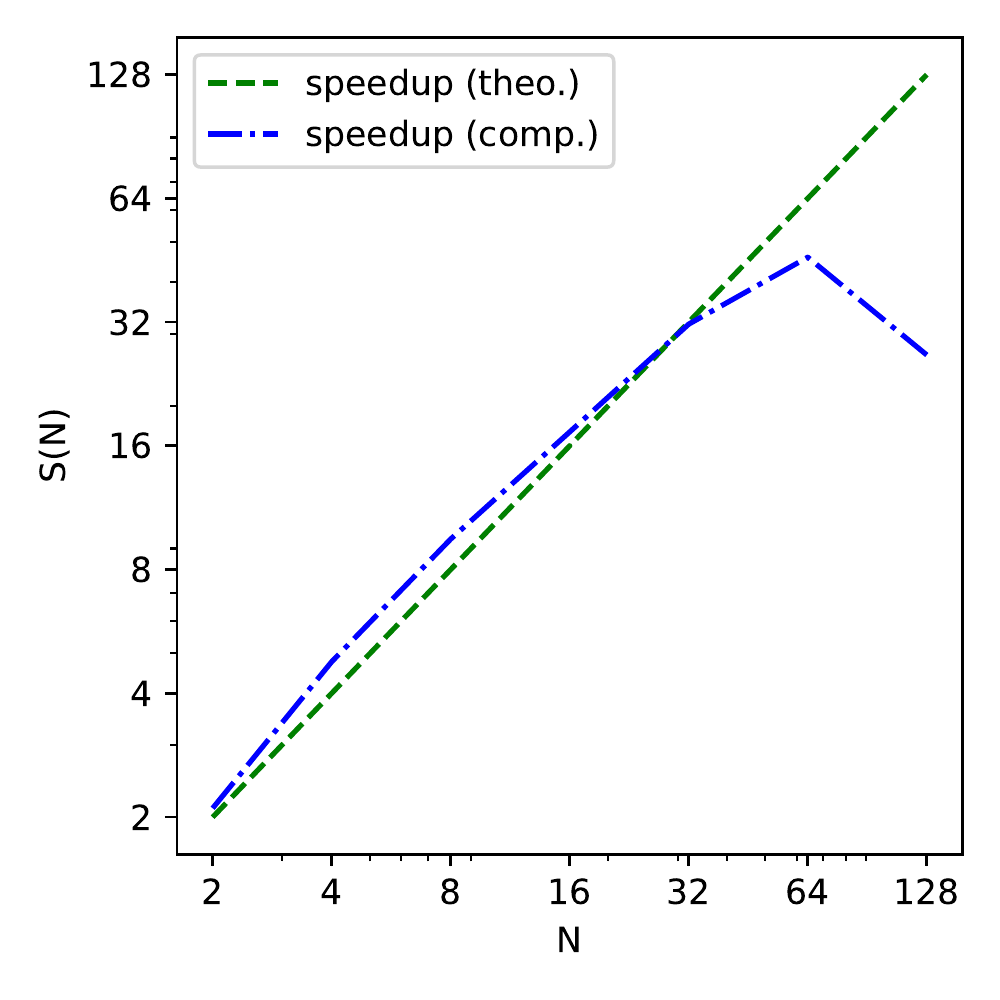}
                            & 
                            \includegraphics[width=0.51\textwidth]{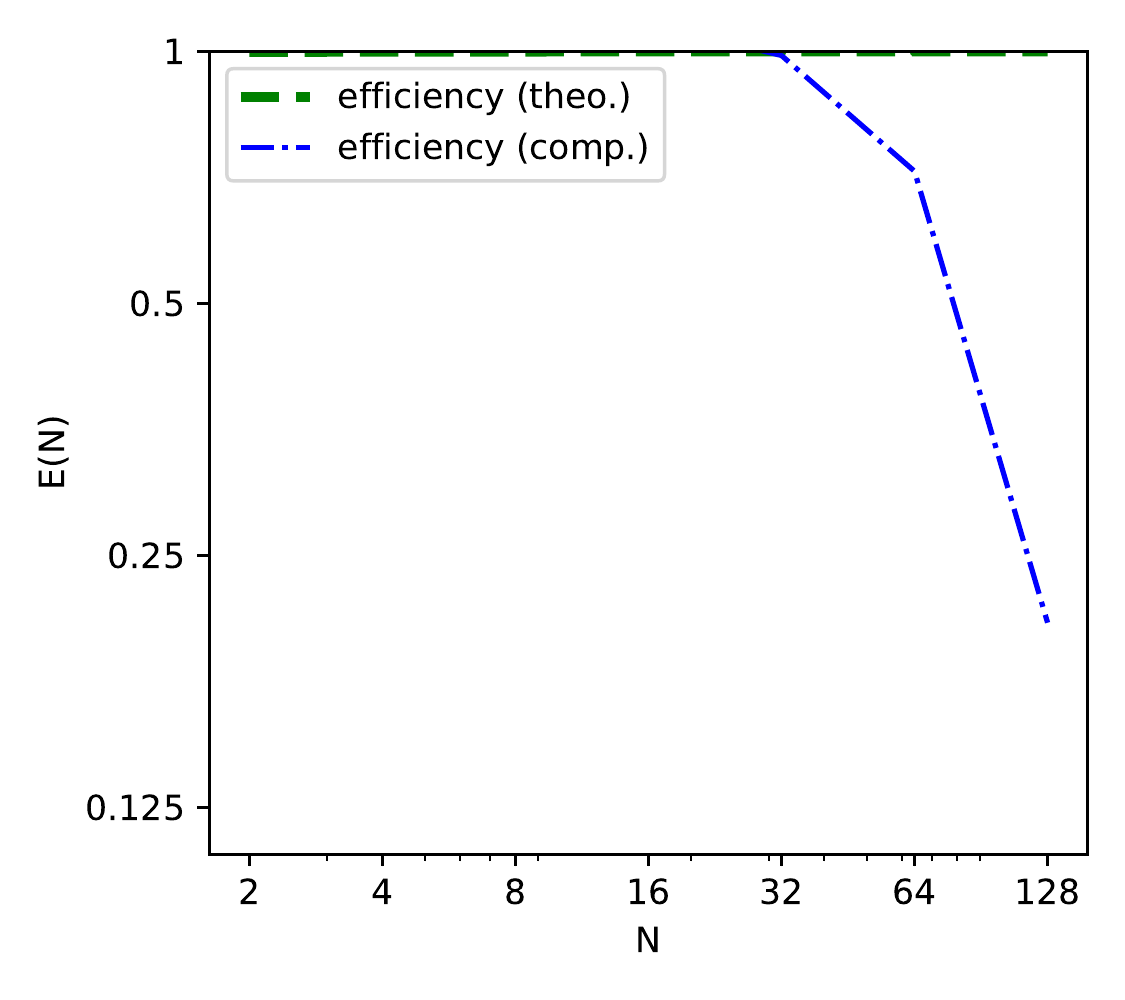}
                        \end{tabular}
                    \caption{Computational results of LU matrix decomposition with variable workload}
                    \label{fig:matLUDecompVarWork}
                \end{figure}

Figure \ref{fig:matLUDecompVarWork} shows the speedup and efficiency achieved in the computational experiment. As suggested by the theoretical analysis, the speedup increases linearly and the efficiency is close to $1$ (using up to $N=32$ threads). Since both the serial and parallel execution times are relatively small (see the figures in Table \ref{tab:resultsMatrixLUDecompVarSize} in  Appendix \ref{sec:AppendixExperiment2}), the computational speedup seems to be slightly above the theoretical speedup (for $N=2,4,8,16$) due to numerical problems in determining execution times on the cluster with sufficient accuracy. For $N=64$ the speedup still increases, but at a much lower rate (speedup $\approx$ 46) with an efficiency of about $0.72$; for $N=128$ the speedup even decreases (speedup $\approx$ 26.61) with an efficiency of about $0.21$. It seems reasonable to expect that for $N>128$ both speedup and efficiency decrease further. I did not investigate these effects because the computing node available for the experiments has a maximum of 128 cores. Obviously, and as expected, for an empirically large number of PPUs, theoretical speedup and efficiency bounds become much looser for computations. Therefore, in the next section, I discuss how the impact of parallelization overhead can be taken into account in future research when looking for tighter bounds.

\color{black}
\section{Discussion}
\label{sec:discussion}

\subsection{Application of model and typology}
\label{sec:findagoodheading}

The scalability typology developed in the previous section allows researchers to determine the limits of speedup and efficiency of their applications and the extent to which computational parallelization scales for their needs. They also support researchers regarding their decision of how many parallel PUs to use in the presence of economic budget constraints. My typology shown in Table \ref{tab:scalabilityCases} provides a more comprehensive picture of scalability in homogeneous computing environments than speedup laws suggested in the literature (shown in Table \ref{tab:generalizationSpeedupModelsLiterature}), thereby widening the scope of applying scalability insights. At the same time, my typology includes all of the abovementioned speedup laws as illustrated in Table \ref{tab:scalabilizyTypesSpeedupModelsLiterature}. 
\color{red}
In particular, Amdahl's law and Gustafson's law are consistent with our classification. These laws have been discussed in the literature as two different types of scaling: \emph{strong scaling} focuses on the ``Amdahl setting'', where the total problem size remains fixed as more processors are added, and the goal is to run the same problem size faster. In contrast, \emph{weak scaling} focuses on the ``Gustafson setting'', where the problem size per processor remains fixed as more processors are added, the total problem size is linear in the number of processors used, and the goal is to run larger problems in the same amount of time \citep{barney2010introduction}.
\color{black}

\begin{table}
\center
\caption{Scalability types of speedup models suggested in the literature}
\label{tab:scalabilizyTypesSpeedupModelsLiterature}\begin{spacing}{1.2}
\begin{tabular}{|p{3cm}|c|l|l|}
\hline
Speedup model & Scalability case & Scalability type & Conditions\\
\hline
 \textit{Amdahl's law} \citep{amdahl1967validity} & $B_{SC}$ & $(\beta_{s,f,g}=\frac{1}{s},0)$ & $c_f=c_g=c_h=\alpha_h=1, \alpha_f=\alpha_g=0$ \\[5mm]\hline
 \textit{Gustafson's law} \citep{gustafson1988reevaluating} & $G_{SC}$ & $(\infty_h,\gamma_{s,f,g,h}=1-s)$ &  $c_f=c_g=c_h=\alpha_g=\alpha_h=1, \alpha_f=0$ \\[5mm]\hline
 \textit{Generalized scaled speedup model} \citep{juurlink2012amdahl} & $J_{SC}$ & $(\infty_{f,g},0)$ & $c_f=c_g=c_h=\alpha_h=1, \alpha_f = 0, \alpha_g=\frac{1}{2}$ \\[5mm]\hline
 \textit{Sun and Ni's law} \citep{sun1990another,sun1993scalable} & $B_{SC}$ & $(\beta_{s,f,g}=\frac{1}{s},0)$ & $c_f=c_g=c_h=\alpha_h=1, \alpha_f = 0, \alpha_g=0$ \\[5mm]
                                                                  & $G_{SC}$ & $(\infty_h,\gamma_{s,f,g,h}=1-s)$ & $c_f=c_g=c_h=\alpha_h=1, \alpha_f = 0, \alpha_g=1$ \\[5mm]
                                                                  & $H_{SC}$ & $(\infty_h,\gamma_h=1)$ & $c_f=c_g=c_h=\alpha_h=1, \alpha_f = 0, \alpha_g>1$ \\[5mm]
                                                                  & $J_{SC}$ & $(\infty_{f,g},0)$ & $c_f=c_g=c_h=\alpha_h=1, \alpha_f = 0, 0 < \alpha_g < 1$ \\[5mm]\hline
 \textit{Scaled speedup model} \citep{schmidt2017parallel} & $A_{SC}$ & $(1,0)$ & $c_h=\alpha_h=1, \alpha_g-\alpha_f < 0$ \\[5mm]
                                                           & $B_{SC}$ & $(\beta_{s,f,g},0)$ & $c_h=\alpha_h=1, \alpha_g-\alpha_f = 0$ \\[5mm]
                                                           & $G_{SC}$ & $(\infty_h,\gamma_{s,f,g,h})$ & $c_h=\alpha_h=1, \alpha_g-\alpha_f = 1$ \\[5mm]
                                                           & $H_{SC}$ & $(\infty_h,\gamma_h=1)$ & $c_h=\alpha_h=1, \alpha_g-\alpha_f > 1$ \\[5mm]
                                                           & $J_{SC}$ & $(\infty_{f,g},0)$ & $c_h=\alpha_h=1, 0 < \alpha_g-\alpha_f < 1$ \\[5mm]

%
%$c_h=\alpha_h=1$ & $S(N)=\frac{s \cdot \pol{c_f}{\alpha_f} + p \cdot \pol{c_g}{\alpha_g}}{s \cdot \pol{c_f}{\alpha_f} + p \cdot \pol{c_g}{(\alpha_g-1)}}$ & \textit{Scaled speedup model} %\citep{schmidt2017parallel}\\[2mm] 
%& & (under the assumption that the sequential and parallel workloads are given by power functions $\pol{c_f}{\alpha_f}$ and $\pol{c_g}{\alpha_g}$, resp.)\\[5mm]
%$c_f=c_g=c_h=\alpha_h=1, \alpha_f = 0$ & $S(N)=\frac{s + p \cdot \pol{}{\alpha_g}}{s + p \cdot \pol{}{(\alpha_g-1)}}$ & \textit{Sun and Ni's law} \citep{sun1990another,sun1993scalable}\\
%& & (under the assumption that the parallel workload is given by a power function $\pol{}{\alpha_g}$)\\[5mm]  

 \hline
\end{tabular}
\end{spacing}
\end{table}

A key issue for researchers is the assignment of their particular application to a scalability type, which requires determining the sequential workload $s$ and the power functions $f,g,h$ (see Definition \ref{def:powerFunctions}). In order to determine $s$, a straightforward approach is to execute the application on a single PU and measure the execution times $t_{seq}$ ans $t_{par}$ of the sequential and parallelizable workloads, resp., leading to $s=\frac{t_{seq}}{t_{seq}+t_{par}}$ and $p=1-s$.
\color{red}
I used this approach in the two computational experiments. As the second experiment shows, this approach can lead to inaccurate  results when the total execution time is small and numerical problems occur. As Table \ref{tab:scalabilityCases} shows, the value of $s$ affects speedup and efficiency bounds in some scalability cases. Under such circumstances, it may be helpful to use ``safe'' lower and upper bounds on $s$, and to define intervals of speedup and efficiency bounds.             

The determination of the power functions $f$, $g$ and $h$ can be much more challenging, depending on the algorithm used. While in the first experiment the functions could be determined straightforward, the second experiment shows that analytical, numerical and statistical approaches may be required to obtain reliable estimates of the functions. The application of such approaches may become quite tedious or even impossible due to the complexity of the parallelized algorithm. In this case, researchers are advised to use computational experiments to determine the functions. These issues limit the practical applicability and usefulness of the proposed model. 

 If the application involves data processing and analysis, the amount of data to be processed, and thus the parallelizable workload given by $p \cdot g(N)$, should be relatively easy to determine. Also, it seems reasonable to expect that the parallelizable workload can be almost equally distributed across the available PPUs ($h(N)\approx N$), unless the data processing requires taking into account data dependencies. In contrast, in numerical algorithms, which can be as simple as LU matrix decomposition as used in the second experiment, the determination of the functions $g$ and $h$ can become quite complicated. Also, in an optimization context, such as solving an instance of a mixed-integer linear program to optimality with a branch-and-bound algorithm, both the parallelizable workload $p \cdot g(N)$ and the effect of parallelization expressed by $h(N)$ may depend not only on $N$, which could be related to the size of the (optimization) problem instance to be solved, but also on the instance itself. For example, while some instances of a given problem may show (sub)linear speedup, other instances of the same size may benefit from superlinear speedup (see, for example, \citep{Rauchecker_Schryen_2019}).   

\subsection{Consideration of parallelization overhead}
\label{sec:consParallOverhead}

In the speedup and efficiency equations (\ref{eq:genericSpeedupEquation})-(\ref{eq:genericEfficiencyEquation}) and the resulting analysis, any overhead due to parallelization has been omitted for a variety of reasons. However, as can be seen from the results of the second computational experiment and widely acknowledged in the literature, parallelization overhead can cause large scalability degradation and have a significantly large impact on speedup and efficiency. Thus, the bounds may become loose. 

Overhead can be caused by several phenomena, including the existence of critical regions (exclusive access for only one process), inter-process communication, the creation and management of threads, and sequential-to-parallel synchronization due to data exchange \citep{yavits2014effect}. Such phenomena can be analyzed by considering an overhead function in the determination of parallel execution times, which are likely to depend, among other factors, on the number of parallel PUs.

In the literature, several ways of incorporating overhead functions into execution time evaluation have been proposed. One option is to include an additive overhead term in the speedup and efficiency functions (e.g., \citep{flatt1989performance,huang2013extending,pei2016extending}); an alternative approach is to use a multiplicative term (coefficient function) to account for the increased workload of parallel execution due to parallelization overhead (e.g., \citep{eyerman2010modeling,sun1990another,huang2013extending}).
Although I focus here on additive overhead functions, the key concepts, opportunities, and challenges for considering parallelization overhead also apply to multiplicative functions.

Our general speedup and efficiency equations ( \ref{eq:generalSpeedupEquation})-( \ref{eq:generalEfficiencyEquation}) already account for additive overhead with the term $z(N)$. Assuming that parallelization overhead can also be determined by a polynomial function $z(N):=c_z \cdot N^{\alpha_z}\;(c_z>0,\alpha_z \geq 0)$, my generic speedup and efficiency equations (\ref{eq:genericSpeedupEquation})-(\ref{eq:genericEfficiencyEquation}) would have to be changed to
\begin{equation}
\label{eq:genericSpeedupEquationWithOverhead}
S(N)=\frac{\pol{s \cdot c_f}{\alpha_f} + \pol{p \cdot c_g}{\alpha_g}}{\pol{s \cdot c_f}{\alpha_f} + \pol{\frac{p \cdot c_g}{c_h}}{(\alpha_g - \alpha_h)}+c_z \cdot N^{\alpha_z}}
\end{equation}
and
\begin{equation}
\label{eq:genericEfficiencyEquationWithOverhead}
E(N)=\frac{\pol{s \cdot c_f}{\alpha_f} + \pol{p \cdot c_g}{\alpha_g}}{\pol{s \cdot c_f}{(\alpha_f + 1)} + \pol{\frac{p \cdot c_g}{c_h}}{(\alpha_g - \alpha_h + 1)}+\cdot (c_z \cdot N^{\alpha_z+1})},
\end{equation}
respectively.

In the study of \citet{flatt1989performance}, the authors suggest that an additive (and continuous) overhead function should satisfy some mathematical assumptions. Under these assumptions, some theoretical results can be derived. One important result is that when Amdahl's setting is extended to include overhead, the speedup has a unique maximum at $N \geq 1$. In Appendix \ref{appendixSec:C}, I list the assumptions and prove that any polynomial overhead function $z(N)=c_z\cdot N^{\alpha_z}-c_z\;(c_z,\alpha_z >0, N \geq 1)$ satisfies all assumptions, so that the above result holds for modified equation (\ref{eq:genericSpeedupEquationWithOverhead}) in the Amdahl setting, if we exclude the case $\alpha_z = 0$ and allow the expansion of $z(N)$ by subtracting the constant $c_z$. \citet{flatt1989performance} also gives theoretical bounds on the (scaled) speedup with increasing problem size. 

\citet{huang2013extending} also suggest using an additive overhead function under the Amdahl setting that accounts for the data transmission overhead in multi-core environments. They propose using an overhead function $z(i)=f_t + \frac{g_t}{i}$, where $i$ is the number of communication links of a single core, and $f_t$ and $g_t$ are the sequential and parallel parts of the transmission, respectively. Applying their extended Amdahl model to a setting with area constraints \citep{hill2008amdahl}, they look for zero points of the speedup derivative to identify the optimal speedup.

A key conclusion from these findings is that if an additive overhead function is considered in parallel execution time and performance analysis, then functions for execution time, speedup, and efficiency may become non-monotonous, and as a consequence, the determination of limits and bounds with asymptotic analysis needs to be replaced or complemented by an analysis of extreme points; i.e., an analysis that accounts for parallelization overhead should determine the optimal number of PPUs for a given metric such as parallel execution time, speedup, and efficiency. This approach would not only lead to better predictions, but would also immediately lead to a suggestion of the appropriate number of parallel PUs to choose. Using my general speedup and efficiency equations (\ref{eq:generalSpeedupEquation})-( \ref{eq:generalEfficiencyEquation}), this would lead to solving the optimization problems (\ref{eq:generalExecutionTimeOptimization})-(\ref{eq:generalEfficiencyOptimization}) for execution time, speedup, and efficiency, respectively.
\begin{equation}
\label{eq:generalExecutionTimeOptimization}
\min_{N\in \mathbb{N}} \left( s\cdot f(N)+\frac{p\cdot g(N)}{h(N)}+z(N) \right)
\end{equation}
\begin{equation}
\label{eq:generalSpeedupOptimization}
\max_{N\in \mathbb{N}} S(N)= \max_{N\in \mathbb{N}} \left( \frac{s\cdot f(1)+p\cdot g(1)}{s\cdot f(N)+\frac{p\cdot g(N)}{h(N)}+z(N)} \right)
\end{equation}
\begin{equation}
\label{eq:generalEfficiencyOptimization}
\max_{N\in \mathbb{N}} E(N)= \max_{N\in \mathbb{N}} \left( \frac{s\cdot f(1)+p\cdot g(1)}{N \cdot (s\cdot f(N)+\frac{p\cdot g(N)}{h(N)}+z(N))} \right)
\end{equation}  
The determination of an additive overhead function $z$ should take into account the (architecture of the) parallel system, including the speed of the cores, the size and structure of the caches, and the operating system \citep{Brown2000}. Also, it should consider the roots of the parallelization overhead. For example, \citet{flatt1989performance} suggest using different overhead functions for i) scheduling in shared memory multiprocessors using critical regions, ii) synchronization in an array of processors arranged as a k-cube, and iii) synchronization in an array of processors connected by a logarithmic network. Additive terms were also used to account for the overhead associated with data preparation, communication, and synchronization \citep{li1988analysis,al2020amdahl,pei2016extending,yavits2014effect,huang2013extending}.

In summary, a performance analysis that takes into account parallelization overhead needs to consider
the root(s) of the parallelization overhead and the parallel system architecture applicable in a given context to determine the mathematical structure of the overhead function, its mathematical embedding in the computation of execution times (additive, multiplicative, etc.). From a methodological perspective, it requires searching for extreme points due to possible non-monotonicity of execution times, speedup and efficiency functions. 

\color{black}
\section{Conclusion}
\label{sec:conclusion}

In this work, I provide a generic speedup (and thus also efficiency) model, which generalizes many prominent models suggested in the literature and allows showing that they can be considered special cases with different assumptions of a unifying approach. The genericity of the speedup model is achieved through parameterization. Considering combinations of parameter ranges, I identify six different asymptotic speedup cases and eight different asymptotic efficiency cases; these cases include sublinear, linear and superlinear speedup and efficiency. Based upon the identified speedup and efficiency cases, I derive eleven different scalability cases and types, to which instantiations of my generic speedup (and efficiency) model may lead. Researchers can draw upon my suggested typology to classify their speedup model and/or to determine the asymptotic scalability of their application when the number of PPUs increases. \color{red} Also, the description of two computational experiments demonstrates the practical application of the model and the typology. \color{black}

My theoretical analysis is based upon several assumptions which are common in the literature (e.g., \citep{sun1993scalable}). First, I assume that the overall workload only contains two parts, a sequential part and a perfectly parallelizable part, which can be executed in parallel on all available PPUs. In practice, the latter condition may not always hold but even then my speedup and efficiency results are useful as they can be used as upper bounds of achievable speedup and efficiency. Alternative models that do not require the above dichotomy distinction have been proposed in the literature, including parallelism/span-and-work models, multiple-fraction models, and roofline models (e.g., \citep[p. 772ff]{cormen2022introduction}, \citep[p. 141ff]{al2020amdahl}), \citep{cassidy2011beyond}). Future theoretical analysis of speedup and efficiency limits may consider those types of models. 

%new: discuission above prvofde avenue for doing future research on overhead functions
%It is beyond the scope of this work to provide a performance analysis of models that consider parallelization overhead. However, as it is an imprtnat issue fo rfutur eresearch and practical applications, we provide some avneues for this research based upon previous research in this field.   
%also, maybe nuew point: - emprical analysis (parameetr values)

Second, while my generic speedup model includes a function for parallelization overhead, I assume that this overhead is negligible and omit this function from my analysis. However, I admit that parallelization overhead may cause large scalability degradation \citep{huang2013extending} and have considerably large effects on speedup and efficiency.
\color{red}
A discussion of the implications and research avenues for accounting for parallelization overhead is provided in Section \ref{sec:consParallOverhead}. Future research can build on these ideas to obtain tighter bounds on speedup and efficiency, and to derive recommendations for the optimal choice of the number of PPUs to use.  

%Overhead can be rooted in several phenomena, including the existence of critical regions (exclusive access for one process only), inter-process communication, and sequential-to-parallel synchronization due to data exchange \citep{yavits2014effect}. While, in this case, speedup and efficiency results can again be used as upper bounds of achievable speedup and efficiency, future work should identify appropriate overhead functions  and complement asymptotic analysis with the analysis of speedup and efficiency optima in order to determine the optimal number of parallel PUs to be used in particular problem settings. In the literature, different ways to integrate overhead functions into the determination of execution times have been suggested. One option is the provision of additive terms as it has been done for overhead of data preparation \citep{pei2016extending} or communication required for mapping a workload onto multiple cores \citep{li1988analysis,al2020amdahl}. A theoretical analysis of an additive overhead function can be found in \citep{flatt1989performance}. Another option involves adding a coefficient function to the execution time of the parallelized workload (e.g., \citep{eyerman2010modeling,sun1990another}).
\color{black}

Third, in my analysis I focus on homogeneous parallel computing environments. I acknowledge that, in modern parallel computing environments, parallel processing units are not necessarily equally potential in their computing capabilities and that a substantial body of literature on speedup in such heterogeneous computing environments exist; see, for example, the surveys on heterogeneous multicore environments of \citet{al2013survey} and \citet{al2020amdahl}. Many works suggest extensions of \textit{Amdahl's law}, \textit{Gustafson's law} and/or \textit{Sun and Ni’s Law} for such environments (e.g., \citep{hill2008amdahl,juurlink2012amdahl,Ye2013,rafiev2018speedup,zidenberg2013optimal,morad2012generalized,moncrieff1996heterogeneous}). Studies on speedup and efficiency properties of architecture-dependent laws are particularly helpful for the design of multi-core environments. Future work may extend my generic speedup model by concepts of different types of PPUs as suggested in the literature and adapt my theoretical analysis to heterogeneous settings.

Finally, merging the two abovementioned research streams leads to the consideration of speedup and efficiency in heterogeneous parallel computing environments under the consideration of overhead (functions). My model and theoretical analysis may be extended in both regards, drawing on prior work. For example, \citet{huang2013extending} suggest an extension of \textit{Amdahl's law} and \textit{Gustafson's law} in architecture-specific multi-core settings by considering communication overhead and area constraints; \citet{pei2016extending} extend \textit{Amdahl's law} for heterogeneous multicore processors with the consideration of overhead through data preparation; and \citet{morad2006performance} analyze overhead as a result of synchronization, communication and coherence costs based upon Amdahl's model for asymmetric cluster chip multiprocessors.

With the suggestion of a generic and unifying speedup (and efficiency) model and its asymptotic analysis, I hope to provide a theoretical basis for and typology of scalability of parallel algorithms in homogeneous computing environments. Future research can draw upon and extend my research results to address various extensions of my setting.	

%\printcredits
     
%% Loading bibliography style file
%\bibliographystyle{model1-num-names}
%\bibliographystyle{cas-model2-names}
\bibliographystyle{abbrvnat}

% Loading bibliography database
\bibliography{bib/performanceParallelization}

\newpage
\appendix

\section{Calculations of speedup limits}
\label{appendixSec:A}

We rewrite (I) as follows:
%\begin{empheq}[left={(I)=\frac{\pol{s \cdot c_f}{\alpha_f} }{\pol{s \cdot c_f}{\alpha_f} + \pol{\frac{p \cdot c_g}{c_h}}{(\alpha_g - \alpha_h)}}=}\empheqlbrace]{align}
%\frac{\pol{s \cdot c_f}{(\alpha_f - \alpha_g + \alpha_h)} }{\pol{s \cdot c_f}{(\alpha_f - \alpha_g + \alpha_h)} + \frac{p \cdot c_g}{c_h}}
%&,\; \alpha_f \geq \alpha_g - \alpha_h \label{eq:I_Def_a}\\
%\frac{s \cdot c_f}{s \cdot c_f + \pol{\frac{p \cdot c_g}{c_h}}{\alpha_g - \alpha_h - \alpha_f}}
%&,\; \alpha_f < \alpha_g - \alpha_h \label{eq:I_Def_b}
%\end{empheq}
\begin{equation}
\label{eq:I_Def}
(I)=\frac{\pol{s \cdot c_f}{\alpha_f} }{\pol{s \cdot c_f}{\alpha_f} + \pol{\frac{p \cdot c_g}{c_h}}{(\alpha_g - \alpha_h)}}=
\frac{s \cdot c_f}{s \cdot c_f + \pol{\frac{p \cdot c_g}{c_h}}{\alpha_g - \alpha_h - \alpha_f}}
\end{equation}

and obtain 
\begin{empheq}[left={\lim_{N\rightarrow \infty} (I)=}\empheqlbrace]{align}
1 &,\; \alpha_f > \alpha_g - \alpha_h \label{eq:IDef_a_Final}\\
\frac{s \cdot c_f}{s \cdot c_f + \frac{p \cdot c_g}{c_h}} & , \alpha_f = \alpha_g - \alpha_h \label{eq:IDef_b_Final}\\
0  & , \alpha_f < \alpha_g - \alpha_h \label{eq:IDef_c_Final}
\end{empheq}

It should be noticed that, in contrast to the limits in equations (\ref{eq:IDef_a_Final}) and (\ref{eq:IDef_b_Final}), which show upper bounds, the limit in equation (\ref{eq:IDef_c_Final}) represents a lower bound.

We rewrite (II) as follows:
\begin{empheq}[left={(II)=\frac{\pol{p \cdot c_g}{\alpha_g} }{\pol{s \cdot c_f}{\alpha_f} + \pol{\frac{p \cdot c_g}{c_h}}{(\alpha_g - \alpha_h)}}=}\empheqlbrace]{align}
\frac{p \cdot c_g}{s \cdot \pol{c_f}{(\alpha_f - \alpha_g)} + \frac{p \cdot c_g}{\pol{c_h}{\alpha_h}}}
&,\; \alpha_f > \alpha_g \label{eq:II_Def_a}\\
\frac{p \cdot c_g}{s \cdot c_f + \frac{p \cdot c_g}{\pol{c_h}{\alpha_h}}}
&,\; \alpha_f = \alpha_g \label{eq:II_Def_b}\\
\frac{p \cdot \pol{c_g}{(\alpha_g - \alpha_f)}}{s \cdot c_f + \frac{p \cdot c_g}{c_h} \cdot \pol{c_h}{(\alpha_g - \alpha_h - \alpha_f)}}
&,\; \alpha_f < \alpha_g \label{eq:II_Def_c}
\end{empheq}

For equation (\ref{eq:II_Def_a}), we obtain 
\begin{equation}
\label{eq:II_Def_a_Final}
\lim_{N\rightarrow \infty} (II)= 0,\; \alpha_f > \alpha_g
\end{equation}

For equation (\ref{eq:II_Def_b}), we obtain 
\begin{empheq}[left={\lim_{N\rightarrow \infty} (II)=}\empheqlbrace]{align}
\frac{p \cdot c_g}{s \cdot c_f}
&,\; \alpha_f = \alpha_g, \alpha_h > 0 \label{eq:II_Def_b1_Final}\\
\frac{p \cdot c_g}{s \cdot c_f + \frac{p \cdot c_g}{c_h}}
&,\; \alpha_f = \alpha_g, \alpha_h = 0 \label{eq:II_Def_b2_Final}
\end{empheq}

For equation (\ref{eq:II_Def_c}), we obtain 
\small
\begin{empheq}[left={\lim_{N\rightarrow \infty} (II)=}\empheqlbrace]{align}
\lim_{N\rightarrow \infty} \frac{p \cdot \pol{c_g}{\alpha_h}}{\frac{s \cdot c_f}{\pol{}{(\alpha_g - \alpha_h - \alpha_f)}} + \frac{p \cdot c_g}{c_h}} = \infty\;(\Theta (N^{\alpha_h}))
&,\; \alpha_f < \alpha_g, \alpha_h > 0, \alpha_g - \alpha_h - \alpha_f \geq 0 \label{eq:II_Def_b3a_Final}\\
\lim_{N\rightarrow \infty} \frac{p \cdot \pol{c_g}{(\alpha_g - \alpha_f)}}{s \cdot c_f + \frac{p \cdot c_g}{c_h} \cdot \pol{}{(\alpha_g - \alpha_h - \alpha_f)}} = \infty\;(\Theta (N^{(\alpha_g - \alpha_f)}))
&,\; \alpha_f < \alpha_g, \alpha_h > 0, \alpha_g - \alpha_h - \alpha_f < 0 \label{eq:II_Def_b3b_Final}\\
\lim_{N\rightarrow \infty} \frac{p \cdot c_g}{\frac{s \cdot c_f}{\pol{}{(\alpha_g - \alpha_f)}} + \frac{p \cdot c_g}{c_h}} = c_h
&,\; \alpha_f < \alpha_g, \alpha_h = 0 \label{eq:II_Def_b3c_Final}\end{empheq}

\normalsize

It should be noticed that, in contrast to the limits in equations (\ref{eq:II_Def_b1_Final}), (\ref{eq:II_Def_b2_Final}) and (\ref{eq:II_Def_b3c_Final}), which show upper bounds, the limit in equation (\ref{eq:II_Def_a_Final}) represents a lower bound.

\section{Calculations of efficiency limits}
\label{appendixSec:B}

\begin{empheq}[left={\lim_{N\rightarrow \infty} (I')=\lim_{N\rightarrow \infty} \frac{s \cdot c_f}{s \cdot c_f \cdot N + \pol{\frac{p \cdot c_g}{c_h}}{(\alpha_g - \alpha_h - \alpha_f + 1)}}=}\empheqlbrace]{align}
0&,\; \alpha_g - \alpha_h - \alpha_f + 1 > 0\label{eq:I_eff_1}\\
0&,\; \alpha_g - \alpha_h - \alpha_f + 1 = 0\label{eq:I_eff_2}\\
0&,\; \alpha_g - \alpha_h - \alpha_f + 1 < 0\label{eq:I_eff_3}
\end{empheq}

It should be noticed that the limits in equations (\ref{eq:I_eff_1}) to (\ref{eq:I_eff_3}) all show lower bounds.

We rewrite $(II')$ as follows:
\small
\begin{empheq}[left={(II')=\frac{\pol{p \cdot c_g}{\alpha_g}}{\pol{s \cdot c_f}{(\alpha_f + 1)} + \pol{\frac{p \cdot c_g}{c_h}}{(\alpha_g - \alpha_h + 1)}}=}\empheqlbrace]{align}
\frac{p \cdot c_g \cdot \pol{}{(\alpha_g - \alpha_f - 1)}}{s \cdot c_f + \frac{p \cdot c_g}{c_h}\cdot \pol{}{(\alpha_g - \alpha_h - \alpha_f)}}
&,\; \alpha_f < \alpha_g - 1, \alpha_h > 1, \alpha_f\geq \alpha_g - \alpha_h \label{eq:II_eff_Def_a}\\
\frac{p \cdot c_g \cdot \pol{}{(\alpha_h - 1)}}{\frac{s \cdot c_f}{\pol{}{(\alpha_g - \alpha_f -1)}} + \frac{p \cdot c_g}{c_h}}
&,\; \alpha_f < \alpha_g - 1, \alpha_h > 1, \alpha_f <\alpha_g - \alpha_h  \label{eq:II_eff_Def_b}\\
\frac{p \cdot c_g}{\pol{s \cdot c_f}{(\alpha_f - \alpha_g + 1)} + \pol{\frac{p \cdot c_g}{c_h}}{(- \alpha_h + 1)}} &,\; \text{else} \label{eq:II_eff_Def_c}
\end{empheq}

\normalsize

and obtain
\begin{empheq}[left={\lim_{N\rightarrow \infty} (II')=}\empheqlbrace]{align}
0 &,\; \alpha_f > \alpha_g - 1, 0 \leq \alpha_h \leq 1 \label{eq:II_eff_1}\\
0 &,\; \alpha_f > \alpha_g - 1, \alpha_h > 1 \label{eq:II_eff_2}\\
0 &,\; \alpha_f = \alpha_g - 1, 0 \leq \alpha_h \leq 1 \label{eq:II_eff_3}\\
\frac{p \cdot c_g}{s \cdot c_f + \frac{p \cdot c_g}{c_h}} &,\; \alpha_f = \alpha_g - 1, \alpha_h = 1 \label{eq:II_eff_4}\\
\frac{p \cdot c_g}{s \cdot c_f} &,\; \alpha_f = \alpha_g - 1, \alpha_h > 1 \label{eq:II_eff_5}\\
0 &,\; \alpha_f < \alpha_g - 1, 0 \leq \alpha_h < 1 \label{eq:II_eff_6}\\
c_h &,\; \alpha_f < \alpha_g - 1, \alpha_h = 1 \label{eq:II_eff_7}\\
\infty \;(\Theta (N^{\alpha_g - \alpha_f -1})) & ,\; \alpha_f < \alpha_g - 1, \alpha_h > 1, \alpha_f > \alpha_g - \alpha_h \label{eq:II_eff_8}\\
\infty \;(\Theta (N^{\alpha_h -1})) & ,\; \alpha_f < \alpha_g - 1, \alpha_h > 1, \alpha_f = \alpha_g - \alpha_h \label{eq:II_eff_9}\\
\infty \;(\Theta (N^{\alpha_h -1})) & ,\; \alpha_f < \alpha_g - 1, \alpha_h > 1, \alpha_f < \alpha_g - \alpha_h \label{eq:II_eff_10}
\end{empheq}

It should be noticed that the limits in equations (\ref{eq:II_eff_1}) to (\ref{eq:II_eff_6}) all show lower bounds while the limit in equation (\ref{eq:II_eff_7}) is an upper bound.

\color{red}

\section{Computational experiments}
\label{sec:AppendixExperiments}

\subsection{Experiment 1: Parallel matrix multiplication with fixed workload}
\label{sec:AppendixExperiment1}

\clearpage

\begin{table}[htb]
    \caption{Computational results of matrix multiplication with fixed size of all matrices}
    \label{tab:resultsMatrixMultFixedSize}
    \begin{spacing}{1.2}
    \begin{tabular}{|c|c|c|c|c|c|}
        \hline
        \multirow{2}{*}{N} & \multirow{2}{*}{T(N) [in ms]}  & \multicolumn{2}{c|}{S(N)} & \multicolumn{2}{c|}{E(N)} \\
        \cline{3-6}
        & & Theor. & Comp. & Theor. & Comp. \\
        \hline
        1   & 1,529,020 & -- & -- & -- & -- \\
        \hline
        2   & 953,760 & 1.953898 & 1.603150 & 0.976949 & 0.801575 \\
        \hline
        4   & 493,262 & 3.735577 & 3.099813 & 0.933894 & 0.774953 \\
        \hline
        8   & 270,447 & 6.865980 & 5.653677 & 0.858248 & 0.706710 \\
        \hline
        16  & 163,341 & 11.817493 & 9.360908 & 0.738593 & 0.585057 \\
        \hline
        32  & 100,269 & 18.481672 & 15.249180 & 0.577552 & 0.476537 \\
        \hline
        64  & 74,392 & 25.739145 & 20.553555 & 0.402174 & 0.321149 \\
        \hline
        128 & 64,154 & 32.027504 & 23.833588 & 0.250215 & 0.186200 \\
        \hline
    \end{tabular}
    \end{spacing}
\end{table}

\subsection{Experiment 2: Parallel LU decomposition with variable workload}
\label{sec:AppendixExperiment2}

Using the algorithm shown in Figure \ref{fig:AlgLUDecomp}, I determine the number of calculations along the outer loop that iterates over the variable $i$. In the first iteration ($i=1$), the elements $L_{j,1}$ ($j=2 \ldots z$) and the elements $U_{j,l}$ ($j,l=2 \ldots z$) are calculated and assigned, for a total of $(z-1) \cdot z$ calculations. Similarly, in the second iteration ($i=2$), the elements $L_{j,2}$ ($j=3 \ldots z$) and the elements $U_{j,l}$ ($j,l=3 \ldots z$) are calculated and assigned; i.e., a total of $(z-2) \cdot (z-1)$ calculations are required. In general, in iteration $i$, the number of required calculations equals $(z-i) \cdot (z-i+1)$. This yields
\begin{alignat*}{3}\hat{g}(z) \quad && = && \quad & \sum_{i=1}^{z-1} \left[ (z-i) \cdot (z-i+1) \right]
= \sum_{i=1}^{z-1} \left[ z^2 - z\cdot i + z - z\cdot i +i^2 - i \right] \\
           && = &&       &\sum_{i=1}^{z-1} \left[ z^2 + z + i \cdot (-z - z - 1) + i^2 \right]
= \sum_{i=1}^{z-1} \left[ z^2 + z \right] + \sum_{i=1}^{z-1} \left[(-2z - 1) \cdot i \right] +  \sum_{i=1}^{z-1}  i^2 \\
           && = &&       &(z-1) \cdot (z^2 + z) - (2z + 1) \cdot \sum_{i=1}^{z-1} i + \sum_{i=1}^{z-1} i^2 \\
           && = &&       &z^3 + z^2 -z^2 - z - (2z + 1) \cdot \frac{z \cdot (z - 1)}{2} + \frac{z \cdot (z-1) \cdot (2z-1)}{6}\\
           && = &&       &z^3 - z + \frac{z \cdot (z-1) \cdot \left[ (2z-1) \cdot 3 \cdot (2z+1) \right]}{6}\\
           && = &&       &z^3 - z + \frac{(z^2-z) \cdot (-4z-4)}{6}
= z^3 - z + \frac{(z^2-z) \cdot (-2z-2)}{3}\\
           && = &&       &z^3 - z + \frac{(-2z^3+2z)}{3}
= z^3 - z - \frac{2}{3} z^3 + \frac{2}{3} z = \frac{1}{3} z^3 + \frac{1}{3} z = \frac{1}{3} \left( z^3-z \right)                   
\end{alignat*}
%= z^3 - z + \frac{z \cdot (z-1) \cdot \left[ (2z-1) \cdot 3 \cdot (2z+1)\right]}{6}
%&z^3 + z^2 -z^2 - z - (2z + 1) \cdot \frac{z \cdot (z - 1)}{2} + \frac{z \cdot (z-1) \cdot (2z-1)}{6}

In the context of determining the scaling function $h$, the Tables \ref{tab:dataPointsLRDecomp100}-\ref{tab:dataPointsLRDecomp100000} show data points of $N$, $\hat{g}(N,z_1)$, $\hat{g}_{reduced}(N,z_1)$, $\hat{h}(N,z_1)$, and $(\hat{h}(N,z_1)/N)$ for various values of the number of rows/columns ($z_1$) of the input matrix.  

\begin{table}[htb]
\caption{Data points of LR decomposition $(z_1=$100)}
\label{tab:dataPointsLRDecomp100}
\begin{tabular}{|c|c|c|c|c|c|c|}
\hline
$i$ & $N=2^i$ & $z$ & $\hat{g}(N,z_1)$ & $\hat{g}_{reduced}(N,z_1)$ & $\hat{h}(N,z_1)$ & $\hat{h}(N,z_1) / N$ \\
\hline
$1$ & $2$ & $200$ & $2.6666e+06$ & $1.33835e+06$ & $1.99245$ & $0.996227$ \\
\hline
$2$ & $4$ & $400$ & $2.13332e+07$ & $5.3634e+06$ & $3.97755$ & $0.994388$ \\
\hline
$3$ & $8$ & $800$ & $1.70666e+08$ & $2.14733e+07$ & $7.94784$ & $0.99348$ \\
\hline
$4$ & $16$ & $1,600$ & $1.36533e+09$ & $8.59323e+07$ & $15.8885$ & $0.993029$ \\
\hline
$5$ & $32$ & $3,200$ & $1.09227e+10$ & $3.43807e+08$ & $31.7697$ & $0.992805$ \\
\hline
$6$ & $64$ & $6,400$ & $8.73813e+10$ & $1.37538e+09$ & $63.5323$ & $0.992693$ \\
\hline
$7$ & $128$ & $12,800$ & $6.99051e+11$ & $5.50185e+09$ & $127.057$ & $0.992636$ \\
\hline
$8$ & $256$ & $25,600$ & $5.59241e+12$ & $2.2008e+10$ & $254.108$ & $0.992608$ \\
\hline
$9$ & $512$ & $51,200$ & $4.47392e+13$ & $8.80333e+10$ & $508.208$ & $0.992594$ \\
\hline
$10$ & $1024$ & $102,400$ & $3.57914e+14$ & $3.52136e+11$ & $1016.41$ & $0.992587$ \\
\hline
$11$ & $2048$ & $204,800$ & $2.86331e+15$ & $1.40855e+12$ & $2032.81$ & $0.992584$ \\
\hline
$12$ & $4096$ & $409,600$ & $2.29065e+16$ & $5.6342e+12$ & $4065.62$ & $0.992582$ \\
\hline
$13$ & $8192$ & $819,200$ & $1.83252e+17$ & $2.25368e+13$ & $8131.23$ & $0.992581$ \\
\hline
$14$ & $16384$ & $1.6384e+06$ & $1.46602e+18$ & $9.01473e+13$ & $16262.4$ & $0.992581$ \\
\hline
$15$ & $32768$ & $3.2768e+06$ & $1.17281e+19$ & $3.60589e+14$ & $32524.9$ & $0.992581$ \\
\hline
$16$ & $65536$ & $6.5536e+06$ & $9.3825e+19$ & $1.44236e+15$ & $65049.8$ & $0.992581$ \\
\hline
$17$ & $131072$ & $1.31072e+07$ & $7.506e+20$ & $5.76943e+15$ & $130100$ & $0.992581$ \\
\hline
$18$ & $262144$ & $2.62144e+07$ & $6.0048e+21$ & $2.30777e+16$ & $260199$ & $0.99258$ \\
\hline
$19$ & $524288$ & $5.24288e+07$ & $4.80384e+22$ & $9.23109e+16$ & $520398$ & $0.99258$ \\
\hline
$20$ & $1.04858e+06$ & $1.04858e+08$ & $3.84307e+23$ & $3.69243e+17$ & $1.0408e+06$ & $0.99258$ \\
\hline
\end{tabular}
\end{table}

\begin{table}
\caption{Data points of LR decomposition $(z_1=$1,000)}
\label{tab:dataPointsLRDecomp1000}
\begin{tabular}{|c|c|c|c|c|c|c|}
\hline
$i$ & $N=2^i$ & $z$ & $\hat{g}(N,z_1)$ & $\hat{g}_{reduced}(N,z_1)$ & $\hat{h}(N,z_1)$ & $\hat{h}(N,z_1) / N$ \\
\hline
$1$ & $2$ & $2,000$ & $2.66667e+09$ & $1.33383e+09$ & $1.99925$ & $0.999625$ \\
\hline
$2$ & $4$ & $4,000$ & $2.13333e+10$ & $5.33633e+09$ & $3.99775$ & $0.999438$ \\
\hline
$3$ & $8$ & $8,000$ & $1.70667e+11$ & $2.13473e+10$ & $7.99475$ & $0.999344$ \\
\hline
$4$ & $16$ & $16,000$ & $1.36533e+12$ & $8.53933e+10$ & $15.9888$ & $0.999297$ \\
\hline
$5$ & $32$ & $32,000$ & $1.09227e+13$ & $3.41581e+11$ & $31.9768$ & $0.999274$ \\
\hline
$6$ & $64$ & $64,000$ & $8.73813e+13$ & $1.36634e+12$ & $63.9528$ & $0.999262$ \\
\hline
$7$ & $128$ & $128,000$ & $6.99051e+14$ & $5.4654e+12$ & $127.905$ & $0.999257$ \\
\hline
$8$ & $256$ & $256,000$ & $5.59241e+15$ & $2.18616e+13$ & $255.809$ & $0.999254$ \\
\hline
$9$ & $512$ & $512,000$ & $4.47392e+16$ & $8.74467e+13$ & $511.617$ & $0.999252$ \\
\hline
$10$ & $1024$ & $1.024e+06$ & $3.57914e+17$ & $3.49787e+14$ & $1023.23$ & $0.999252$ \\
\hline
$11$ & $2048$ & $2.048e+06$ & $2.86331e+18$ & $1.39915e+15$ & $2046.47$ & $0.999251$ \\
\hline
$12$ & $4096$ & $4.096e+06$ & $2.29065e+19$ & $5.5966e+15$ & $4092.93$ & $0.999251$ \\
\hline
$13$ & $8192$ & $8.192e+06$ & $1.83252e+20$ & $2.23864e+16$ & $8185.86$ & $0.999251$ \\
\hline
$14$ & $16384$ & $1.6384e+07$ & $1.46602e+21$ & $8.95456e+16$ & $16371.7$ & $0.999251$ \\
\hline
$15$ & $32768$ & $3.2768e+07$ & $1.17281e+22$ & $3.58182e+17$ & $32743.5$ & $0.999251$ \\
\hline
$16$ & $65536$ & $6.5536e+07$ & $9.3825e+22$ & $1.43273e+18$ & $65486.9$ & $0.999251$ \\
\hline
$17$ & $131072$ & $1.31072e+08$ & $7.506e+23$ & $5.73092e+18$ & $130974$ & $0.999251$ \\
\hline
$18$ & $262144$ & $2.62144e+08$ & $6.0048e+24$ & $2.29237e+19$ & $261948$ & $0.999251$ \\
\hline
$19$ & $524288$ & $5.24288e+08$ & $4.80384e+25$ & $9.16947e+19$ & $523895$ & $0.999251$ \\
\hline
$20$ & $1.04858e+06$ & $1.04858e+09$ & $3.84307e+26$ & $3.66779e+20$ & $1.04779e+06$ & $0.999251$ \\
\hline
\end{tabular}
\end{table}

\begin{table}
\caption{Data points of LR decomposition $(z_1=$10,000)}
\label{tab:dataPointsLRDecomp10000}
\begin{tabular}{|c|c|c|c|c|c|c|}
\hline
$i$ & $N=2^i$ & $z$ & $\hat{g}(N,z_1)$ & $\hat{g}_{reduced}(N,z_1)$ & $\hat{h}(N,z_1)$ & $\hat{h}(N,z_1) / N$ \\
\hline
$1$ & $2$ & $20,000$ & $2.66667e+12$ & $1.33338e+12$ & $1.99992$ & $0.999962$ \\
\hline
$2$ & $4$ & $40,000$ & $2.13333e+13$ & $5.33363e+12$ & $3.99978$ & $0.999944$ \\
\hline
$3$ & $8$ & $80,000$ & $1.70667e+14$ & $2.13347e+13$ & $7.99948$ & $0.999934$ \\
\hline
$4$ & $16$ & $160,000$ & $1.36533e+15$ & $8.53393e+13$ & $15.9989$ & $0.99993$ \\
\hline
$5$ & $32$ & $320,000$ & $1.09227e+16$ & $3.41358e+14$ & $31.9977$ & $0.999927$ \\
\hline
$6$ & $64$ & $640,000$ & $8.73813e+16$ & $1.36543e+15$ & $63.9953$ & $0.999926$ \\
\hline
$7$ & $128$ & $1.28e+06$ & $6.99051e+17$ & $5.46174e+15$ & $127.99$ & $0.999926$ \\
\hline
$8$ & $256$ & $2.56e+06$ & $5.59241e+18$ & $2.1847e+16$ & $255.981$ & $0.999925$ \\
\hline
$9$ & $512$ & $5.12e+06$ & $4.47392e+19$ & $8.73879e+16$ & $511.962$ & $0.999925$ \\
\hline
$10$ & $1024$ & $1.024e+07$ & $3.57914e+20$ & $3.49552e+17$ & $1023.92$ & $0.999925$ \\
\hline
$11$ & $2048$ & $2.048e+07$ & $2.86331e+21$ & $1.39821e+18$ & $2047.85$ & $0.999925$ \\
\hline
$12$ & $4096$ & $4.096e+07$ & $2.29065e+22$ & $5.59282e+18$ & $4095.69$ & $0.999925$ \\
\hline
$13$ & $8192$ & $8.192e+07$ & $1.83252e+23$ & $2.23713e+19$ & $8191.39$ & $0.999925$ \\
\hline
$14$ & $16384$ & $1.6384e+08$ & $1.46602e+24$ & $8.94852e+19$ & $16382.8$ & $0.999925$ \\
\hline
$15$ & $32768$ & $3.2768e+08$ & $1.17281e+25$ & $3.57941e+20$ & $32765.5$ & $0.999925$ \\
\hline
$16$ & $65536$ & $6.5536e+08$ & $9.3825e+25$ & $1.43176e+21$ & $65531.1$ & $0.999925$ \\
\hline
$17$ & $131072$ & $1.31072e+09$ & $7.506e+26$ & $5.72705e+21$ & $131062$ & $0.999925$ \\
\hline
$18$ & $262144$ & $2.62144e+09$ & $6.0048e+27$ & $2.29082e+22$ & $262124$ & $0.999925$ \\
\hline
$19$ & $524288$ & $5.24288e+09$ & $4.80384e+28$ & $9.16328e+22$ & $524249$ & $0.999925$ \\
\hline
$20$ & $1.04858e+06$ & $1.04858e+10$ & $3.84307e+29$ & $3.66531e+23$ & $1.0485e+06$ & $0.999925$ \\
\hline
\end{tabular}
\end{table}

\begin{table}
\caption{Data points of LR decomposition $(z_1=$100,000)}
\label{tab:dataPointsLRDecomp100000}
\begin{tabular}{|c|c|c|c|c|c|c|}
\hline
$i$ & $N=2^i$ & $z$ & $\hat{g}(N,z_1)$ & $\hat{g}_{reduced}(N,z_1)$ & $\hat{h}(N,z_1)$ & $\hat{h}(N,z_1) / N$ \\
\hline
$1$ & $2$ & $200,000$ & $2.66667e+15$ & $1.33334e+15$ & $1.99999$ & $0.999996$ \\
\hline
$2$ & $4$ & $400,000$ & $2.13333e+16$ & $5.33336e+15$ & $3.99998$ & $0.999994$ \\
\hline
$3$ & $8$ & $800,000$ & $1.70667e+17$ & $2.13335e+16$ & $7.99995$ & $0.999993$ \\
\hline
$4$ & $16$ & $1.6e+06$ & $1.36533e+18$ & $8.53339e+16$ & $15.9999$ & $0.999993$ \\
\hline
$5$ & $32$ & $3.2e+06$ & $1.09227e+19$ & $3.41336e+17$ & $31.9998$ & $0.999993$ \\
\hline
$6$ & $64$ & $6.4e+06$ & $8.73813e+19$ & $1.36534e+18$ & $63.9995$ & $0.999993$ \\
\hline
$7$ & $128$ & $1.28e+07$ & $6.99051e+20$ & $5.46137e+18$ & $127.999$ & $0.999993$ \\
\hline
$8$ & $256$ & $2.56e+07$ & $5.59241e+21$ & $2.18455e+19$ & $255.998$ & $0.999993$ \\
\hline
$9$ & $512$ & $5.12e+07$ & $4.47392e+22$ & $8.7382e+19$ & $511.996$ & $0.999993$ \\
\hline
$10$ & $1024$ & $1.024e+08$ & $3.57914e+23$ & $3.49528e+20$ & $1023.99$ & $0.999993$ \\
\hline
$11$ & $2048$ & $2.048e+08$ & $2.86331e+24$ & $1.39811e+21$ & $2047.98$ & $0.999993$ \\
\hline
$12$ & $4096$ & $4.096e+08$ & $2.29065e+25$ & $5.59245e+21$ & $4095.97$ & $0.999993$ \\
\hline
$13$ & $8192$ & $8.192e+08$ & $1.83252e+26$ & $2.23698e+22$ & $8191.94$ & $0.999993$ \\
\hline
$14$ & $16384$ & $1.6384e+09$ & $1.46602e+27$ & $8.94792e+22$ & $16383.9$ & $0.999993$ \\
\hline
$15$ & $32768$ & $3.2768e+09$ & $1.17281e+28$ & $3.57917e+23$ & $32767.8$ & $0.999993$ \\
\hline
$16$ & $65536$ & $6.5536e+09$ & $9.3825e+28$ & $1.43167e+24$ & $65535.5$ & $0.999993$ \\
\hline
$17$ & $131072$ & $1.31072e+10$ & $7.506e+29$ & $5.72667e+24$ & $131071$ & $0.999993$ \\
\hline
$18$ & $262144$ & $2.62144e+10$ & $6.0048e+30$ & $2.29067e+25$ & $262142$ & $0.999992$ \\
\hline
$19$ & $524288$ & $5.24288e+10$ & $4.80384e+31$ & $9.16267e+25$ & $524284$ & $0.999993$ \\
\hline
$20$ & $1.04858e+06$ & $1.04858e+11$ & $3.84307e+32$ & $3.66507e+26$ & $1.04857e+06$ & $0.999993$ \\
\hline
\end{tabular}
\end{table}

\begin{table}[h]
    \centering
    \caption{Computational results of LU matrix decomposition with variable size of matrices}
    \label{tab:resultsMatrixLUDecompVarSize}
    \begin{spacing}{1.2}
    \begin{tabular}{|c|c|c|c|c|c|c|}
        \hline
        \multirow{2}{*}{N} & \multirow{2}{*}{\#(P)PUs} & \multirow{2}{*}{T(N) [in ms]} & \multicolumn{2}{c|}{S(N)} & \multicolumn{2}{c|}{E(N)} \\
        \cline{4-7}
        & & & Theor. & Comp. & Theor. & Comp. \\
        \hline
        1 & 1 & 2 & -- & -- & -- & -- \\
        \hline
        \multirow{2}{*}{2} & 1 & 21 & \multirow{2}{*}{1.997481} & \multirow{2}{*}{2.100000} & \multirow{2}{*}{0.998741} & \multirow{2}{*}{1.050000} \\
        \cline{2-3}
        & 2 & 10 & & & & \\
        \hline
        \multirow{2}{*}{4} & 1 & 167 & \multirow{2}{*}{3.998107} & \multirow{2}{*}{4.771429} & \multirow{2}{*}{0.999527} & \multirow{2}{*}{1.192857} \\
        \cline{2-3}
        & 4 & 35 & & & & \\
        \hline
        \multirow{2}{*}{8} & 1 & 1,053 & \multirow{2}{*}{7.998896} & \multirow{2}{*}{9.486486} & \multirow{2}{*}{0.999862} & \multirow{2}{*}{1.185811} \\
        \cline{2-3}
        & 8 & 111 & & & & \\
        \hline
        \multirow{2}{*}{16} & 1 & 10,255 & \multirow{2}{*}{15.999408} & \multirow{2}{*}{17.293423} & \multirow{2}{*}{0.999963} & \multirow{2}{*}{1.080839} \\
        \cline{2-3}
        & 16 & 593 & & & & \\
        \hline
        \multirow{2}{*}{32} & 1 & 94,539 & \multirow{2}{*}{31.999695} & \multirow{2}{*}{31.628973} & \multirow{2}{*}{0.999990} & \multirow{2}{*}{0.988405} \\
        \cline{2-3}
        & 32 & 2,989 & & & & \\
        \hline
        \multirow{2}{*}{64} & 1 & 831,699 & \multirow{2}{*}{63.999844} & \multirow{2}{*}{46.016323} & \multirow{2}{*}{0.999998} & \multirow{2}{*}{0.719005} \\
        \cline{2-3}
        & 64 & 18,074 & & & & \\
        \hline
        \multirow{2}{*}{128} & 1 & 5,383,229 & \multirow{2}{*}{127.999924} & \multirow{2}{*}{26.609865} & \multirow{2}{*}{0.999999} & \multirow{2}{*}{0.207890} \\
        \cline{2-3}
        & 128 & 202,302 & & & & \\
        \hline
    \end{tabular}
    \end{spacing}
    \#(P)PUs: number of (parallel) processing units.
\end{table}

\clearpage

\section{Mathematical requirements on overhead functions \citep{flatt1989performance}}
\label{appendixSec:C}

\citet{flatt1989performance} formulate the following five requirements on a parallelization overhead function $z: \mathcal{R} \rightarrow \mathcal{R}$:
\begin{enumerate}
    \item $z$ is continuous and twice differentiable with respect to $N$.
    \item $z(1)=0$
    \item $z'(N) > 0\quad \forall N \geq 1$
    \item $N\cdot z''(N)+2\cdot z'(N) > 0\quad \forall N \geq 1$
    \item There exists $N_1 \geq 1$ such that $z(N_1)=1$.
\end{enumerate}

 \begin{lemma}
 Each function $z(N):=c_z \cdot N^{\alpha_z}-c_z$ with $c_z,\alpha_z>0$ meets the above requirements for $N \geq 1$.
 \end{lemma}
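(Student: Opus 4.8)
The plan is to check the five requirements of \citet{flatt1989performance} in turn, substituting the closed form $z(N)=c_z N^{\alpha_z}-c_z$ and invoking only elementary facts about power functions on the domain $N\geq 1$.

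Requirements~1 and~2 are immediate: the map $N\mapsto N^{\alpha_z}$ is infinitely differentiable on $(0,\infty)$, so $z$ is continuous and twice differentiable for $N\geq 1$, and $z(1)=c_z\cdot 1^{\alpha_z}-c_z=0$. I would then compute the two derivatives once, $z'(N)=c_z\alpha_z N^{\alpha_z-1}$ and $z''(N)=c_z\alpha_z(\alpha_z-1)N^{\alpha_z-2}$, and reuse them. Requirement~3 follows because each factor of $z'(N)$ is strictly positive when $c_z,\alpha_z>0$ and $N\geq 1$. For requirement~4, I would insert both derivatives into $N\,z''(N)+2z'(N)$, factor out $c_z\alpha_z N^{\alpha_z-1}>0$, and observe that the remaining bracket equals $(\alpha_z-1)+2=\alpha_z+1>0$, so the whole sum is strictly positive. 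For requirement~5, solving $c_z N_1^{\alpha_z}-c_z=1$ yields the explicit witness $N_1=\left(\tfrac{c_z+1}{c_z}\right)^{1/\alpha_z}$, which is well defined and satisfies $N_1>1\geq 1$ because $\tfrac{c_z+1}{c_z}>1$ and $1/\alpha_z>0$; alternatively the intermediate value theorem applies directly, since $z(1)=0$, $z$ is continuous, and $z(N)\to\infty$.

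No step presents a genuine obstacle, as each reduces to a one-line computation; the only place meriting care is requirement~4, where one must track the exponent arithmetic $N\cdot N^{\alpha_z-2}=N^{\alpha_z-1}$ and notice that the bracketed constant collapses to the sign-definite $\alpha_z+1$ rather than the a priori sign-indefinite $\alpha_z-1$. This is exactly the point at which the hypothesis $\alpha_z>0$ is essential, and it explains why the main text excludes $\alpha_z=0$: in that case $z\equiv 0$, so requirements~3 and~4 both fail.
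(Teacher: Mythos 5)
Your proof is correct and follows essentially the same route as the paper's: compute $z'$ and $z''$, verify requirements 1--4 by direct substitution (with the same factorization $c_z\alpha_z(\alpha_z+1)N^{\alpha_z-1}$ for requirement 4), and exhibit the explicit witness $N_1=\left(\frac{c_z+1}{c_z}\right)^{1/\alpha_z}$ for requirement 5. The closing remark on why $\alpha_z=0$ must be excluded is a useful addition but does not change the argument.
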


 \begin{proof} I prove any of the five conditions separately:
  \begin{enumerate}
\item $z$ is apparently continuous and twice differentiable with $z'(N)=c_z \cdot \alpha_z \cdot N^{\alpha_z - 1}$ and $z''(N)=c_z \cdot \alpha_z \cdot (\alpha_z - 1) \cdot N^{\alpha_z - 2}\quad \forall \alpha_z,c_z>0$
    \item $z(1)=c_z \cdot 1^{\alpha_z}-c_z=0\quad \forall \alpha_z,c_z>0$
    \item $z'(N) = c_z \cdot \alpha_z \cdot N^{\alpha_z-1} >0\qquad \forall \alpha_z,c_z>0\quad \forall N \geq 1$
    \item $N\cdot z''(N)+2\cdot z'(N) = N \cdot c_z \cdot \alpha_z \cdot (\alpha_z - 1) \cdot N^{\alpha_z - 2} + 2 \cdot c_z \cdot \alpha_z \cdot N^{\alpha_z - 1} = c_z \cdot \alpha_z \cdot (\alpha_z+1) \cdot N^{\alpha_z - 1} > 0\quad \forall \alpha_z,c_z>0\quad \forall N \geq 1$
    \item Setting $z(N_1)=1$ leads to $c_z \cdot N_1^{\alpha_z}-c_z = 1 \Leftrightarrow N_1^{\alpha_z} =\frac{c_z + 1}{c_z} \Leftrightarrow N_1 = \left(\frac{c_z + 1}{c_z}\right)^{\frac{1}{\alpha_z}} > 1 \qquad \forall c_z,\alpha_z>0$
\end{enumerate} 
 \end{proof}

\color{black}	

\end{document}